\pgfplotsset{compat=newest}
\newcommand{\bmat}[1]{\left[\begin{array}{#1}}
	\newcommand{\emat}{\end{array}\right]}
\newtheorem{corollary}{Corollary}
\newtheorem{proposition}{Proposition}
\newtheorem{assumption}{Assumption}
\newenvironment{proof}[1][Proof]{\noindent\textsc{#1:} }{\ \rule{0.5em}{0.5em}}
\begin{document}\onehalfspacing
	
		\title{Startup Acquisitions: Acquihires and Talent Hoarding
	}
	
	\author{Jean-Michel Benkert, Igor Letina, and Shuo Liu\thanks{Benkert: Department of Economics, University of Bern. Letina: Department of Economics, University of Bern and CEPR. Liu: Guanghua School of Management, Peking University. Email: jean-michel.benkert@unibe.ch, igor.letina@unibe.ch, shuo.liu@gsm.pku.edu.cn. We are grateful to  Heski Bar-Isaac, Florian Ederer, Johannes Johnen, Massimo Motta, Armin Schmutzler, two anonymous referees and seminar participants at the Universities of Bayreuth, Copenhagen, and Lausanne as well as at the Joint Humboldt University + University of Toronto Theory Conference in Berlin, the CCER Summer Institute in Beijing, the Swiss IO Day 2023, CEPR workshop Beyond Leveraging: Ecosystem Theories of Harm in Digital Mergers (2023), MaCCI Annual Conference 2024,  the Workshop on the Economics of Startup Acquisitions in Bern, and Swiss Society of Economics of Statistics 2024 for valuable feedback and helpful comments. Shuo Liu acknowledges financial support from the National Natural Science Foundation of China (grants 72192844 and 72322006) and Beijing Philosophy and Social Science Foundation (grant 24DTR019).}}

    \date{June 2025}
	\maketitle
	
	\begin{abstract} 

\noindent We study how competitive forces may drive firms to inefficiently acquire startup talent. In our model, two rival firms have the capacity to acquire and integrate a startup operating in an orthogonal market. We show that  firms may pursue such \emph{acquihires} primarily as a  preemptive strategy, even when they appear unprofitable in isolation. Thus, acquihires,  even absent traditional competition-reducing effects,  need not be benign, as they can lead to inefficient talent allocation. Additionally, our analysis underscores that such talent hoarding can diminish consumer surplus and exacerbate job volatility for acquihired employees.

\vspace*{0.6cm}
		
		\textbf{Keywords:} acquihire, talent hoarding, startup acquisition, competition

\vspace*{.1cm}

		\textbf{JEL Codes:} L41, G34, M13
	\end{abstract}

\vspace*{.1cm}
 
	\newpage

\section{Introduction}

Historically, competition authorities were concerned with mergers and acquisitions (M\&As) only when they were likely to reduce effective competition. Since startups, almost by definition, hold small or nonexistent market shares, their acquisitions were rarely challenged \citep[e.g.][]{Bryan+Hovenkamp2020}.
But competition authorities are starting to scrutinize the effects of M\&A activity not only on current but also on \emph{potential} competition. In this context,  \cite{Cunningham2021} have shown that in the pharmaceutical industry, 5.3\%--7.4\% of all acquisitions are so-called ``killer acquisitions,'' aimed at inhibiting future competition. Against this backdrop, competition authorities believe that there may be a case for carefully scrutinizing startup acquisitions.\footnote{In 2020, the Federal Trade Commission investigated ``whether large tech companies are making potentially anticompetitive acquisitions of nascent or potential competitors'' \citep{FTCPress2020}. The European Commission has also indicated stricter enforcement \citep[see e.g.][]{EUCommissionGuidance2021}.}

This increased scrutiny is deemed unnecessary by critics arguing that startup acquisitions typically do not hamper competition -- even when they result in the killing of the startup's product or service. One common argument  \citep[e.g.][]{barnett2023killer} to support this view is that such acquisitions are so-called ``acquihires.''
As the name suggests, acquihires are essentially a hiring instrument: the acquiring firm is primarily interested in hiring the startup's employees, not removing a potential competitor. Consider the case of Drop.io, a startup offering easy file sharing. In 2009, Drop.io was a successful startup, having been named a winner of CNET's Webware 100 award and listed among the 50 best websites by \emph{Time} magazine. After acquiring Drop.io in 2010, Facebook promptly terminated it and announced that its CEO, Sam Lessin, would be assigned to a new role \citep{Webware100,Time,Mashable}. While the startup was ``killed,'' the motivation for doing so is different from the killer acquisitions of \cite{Cunningham2021}. Yet, does that necessarily mean the acquisition was benign?

The goal of our paper is to contribute to this discussion by presenting a simple yet general framework allowing for the study of acquihires. We consider a model in which two symmetric incumbents are competing in one market while a startup is operating in an orthogonal market. This rules out the elimination of potential competitors as the motivation for an acquisition. The incumbents can attempt an acquihire -- that is, acquire the startup and integrate its employees into own operations. An acquihire leads to an efficiency gain for the acquiring firm so that profits of the acquirer increase while those of the competitor decrease. We allow for different degrees of efficiency gains by modeling two levels of match quality between the incumbents and the startup employees.

We present three main results. First, we show that inefficient acquisitions may occur even if the startup is not a potential competitor to the incumbents. The inefficiency is manifested through \textit{talent hoarding}:  firms are engaging in acquihires even when it leads to lower aggregate profits for the startup and the acquirer afterward. Essentially, we find that a low-match firm can increase its expected profits by acquiring the startup before the (potentially high-match) competitor learns of its existence. Such acquihires generate an inefficiency because the startup employees would be more productive by either staying with the startup or (if the option is available) moving to the high-match competitor. Our model thus suggests that startup acquisitions need not be benign even when potential-competition motives are ruled out.

Talent hoarding in our model is driven by the preemption effect, with firms acquiring talent partly to prevent competitors from becoming stronger. Many papers have identified the preemption effect in various settings, from patent races \citep{gilbert1982preemptive}, to M\&As in the international-trade setting \citep{norback2004privatization, norback2007investment}, to technology acquisitions \citep{bryan2020antitrust}, to the case of duopolists facing capacity constrained suppliers in a decentralized market \citep{burguet+Sakovics2017}. Indeed, the point of our paper is to highlight that preemptive motives can be at work even when a startup operating in a different market is being acquired. 

However, preemptively acquiring labor is different -- and can be more pernicious -- from acquisitions of physical assets or technology, because firms do not acquire property rights over labor. In particular, when a firm preemptively obtains physical assets or technology that could be more profitably used by the competitor, both firms have an incentive  to negotiate an efficient allocation of property rights, for example by licensing the technology, as in \cite{katz1987r}. This is not the case when a firm preemptively obtains labor. In this case, the best that the firm  can do is to try to keep the workers from joining the competitor, thereby leading to an inefficient allocation of resources. A similar misallocation of labor can also occur \emph{within} a firm (and therefore without a competition motive), as \cite{haegele2022talent} has documented. Moreover, additional issues, such as the impact of the business cycle, which we will discuss later, appear.

Our second main result concerns the effect of acquihires on consumer surplus. While increasing the efficiency of the acquirer, an acquihire leads to a loss of surplus induced by the disappearance of the startup. Assuming that part of the efficiency gain is passed on to consumers, the overall effect of the acquihire on consumer surplus is ambiguous. In particular, if high-match acquisitions increase it while low-match acquisitions decrease it,  competition authorities who cannot identify match quality may face complex challenges in regulating acquihires. For instance, the welfare effects of banning acquihires might not depend straightforwardly on the ex-ante likelihood of a high-match deal. As we show, prohibiting acquihires tends to decrease consumer welfare when that probability is either very high or very low. When the probability of a high-match deal is high, low-match firms do have a strong incentive to hoard talent but most of the acquihires are with high-match firms, which on average improves welfare. When the probability of a high-match deal is low, low-match firms \emph{endogenously} choose not to engage in (expensive) talent hoarding, so all observed acquihires are with high-match firms and thus welfare enhancing. It is when the probability of a high match is intermediate -- such that low-match firms remain tempted to hoard talent and are not rare -- that banning acquihires has the largest scope for enhancing expected consumer welfare.

An important question is the extent to which the emergence of talent hoarding and its welfare implications hinge on our modeling assumptions. To this end, Section \ref{sec:robustness} considers several variations of our model including changes to the order and timing of moves, bargaining power, or the market's competitiveness. The key takeaway is that talent hoarding is not an artifact of our model but that the preemption motives and their adverse effects are a robust phenomenon. However, the section's results also highlight that there are situations in which inefficient acquihires are less of a concern and do not necessarily warrant regulatory attention. In particular, sequential bidding for the startup is important for the negative welfare effects to emerge. If the bidding is simultaneous, or if the order of bids is uncertain, the prevalence of talent hoarding is reduced. Similarly, if the startup enjoys a strong bargaining position, the adverse welfare effects are less likely to materialize. Market structure also plays a role. Talent hoarding is most likely to occur when there are few competitors and especially if there is a dominant firm facing a challenger. It is less likely to occur in a market with a monopoly firm and a potential entrant, or a market with many competitors. Overall, our results suggest that acquihires may be a concern for regulators but perhaps not warrant the same level of scrutiny as more conventional cases.  

Our final main result is that the labor-market outcomes (hiring and layoffs) for acquihired employees may become more volatile because of firms' talent hoarding. To obtain this result, we expand our baseline model by adding a second period. In between periods, the economy may fall into a recession and, consequently, firms may get hit by potentially correlated adverse shocks. We show that relative to a benchmark case in which firms have no motive to hoard talent,  talent hoarding always leads to more hiring and may also lead to more layoffs  for acquihired employees when the adverse shocks are sufficiently likely or sufficiently positively correlated. This finding lends support to the view that talent hoarding was a contributing factor to the substantial number of layoffs in the tech industry following the increased downward pressure on the US economy in 2022 \citep{NYT}.

We extend our framework in two important directions in Section \ref{sec:extensions}. First, we consider the situation where a startup holds value for firms due to both its employees and its technology. This extension highlights the insight that, unlike labor, technology can be sold or licensed following an acquisition since firms do not obtain property rights over labor. Second, we examine partial acquisitions or investments in startups. In these instances, the acquiring firm purchases only a portion of the startup, gaining a share of its profits and some decision-making authority. We demonstrate that when this decision power includes the ability to block acquisitions by competitors, such investments can be an attractive alternative to full acquihires. We also show that partial acquisitions generate lower inefficiencies than full acquihires but that they occur more often. 

We close the introduction by situating acquihires within the broader context of labor market dynamics and hiring practices. Talent hoarding, as described in our model, is most likely to arise when firms can identify high-impact teams before their competitors. This mechanism could, in principle, extend beyond acquihires to other forms of hiring, such as direct recruitment of employees with scarce and desirable skills. However, we focus on acquihires but not hiring in general because they present a particularly acute setting for talent hoarding: startups provide a clear and verifiable signal of entrepreneurial ability, while the individuals behind them may not yet be widely recognized within the industry. That said, firms may also choose to bypass the acquisition process altogether by directly poaching a startup's employees \citep[see][and the pertaining discussion in the literature review]{bar-isaac2023acquihiring}. Indeed, this practice has recently been documented in the context of AI startups and been portrayed as a way to avoid regulator scrutiny that would ensue with an acquisition of the startup \citep{reverse-aquihire, verge}. For example, in the case of Microsoft poaching employees from the AI startup Inflection, the Competition and Markets Authority in the UK investigated the transaction and treated it as a merger. Such hiring events can have similar economic and competitive implications to acquihires. Indeed, as the Microsoft/Inflection case suggests, these events may affect the acquirer, the hired employees, and the startup in ways comparable to an acquihire, warranting equivalent regulatory treatment \citep{inflection}.

\paragraph{Related literature.} Our paper is most closely related to the literature studying the economics of startup acquisitions. Much of the early literature examined, in various settings, how the prospect of an acquisition affects the incentives of startups and incumbents to invest in innovation \citep[e.g.,][]{Mason2013, norback2012, gans2000incumbency, phillips2013, Rasmusen1988}. Following \cite{Cunningham2021}, who demonstrated that incumbents may acquire startups for anticompetitive reasons, a large literature has studied the effects a more restrictive merger policy would have on innovation and overall welfare \citep{motta2021big, cabral2020merger, cabral2023big, katz2020bigtech, letina2023killer}. 
Others examine how  acquisitions can steer the direction of innovation \citep{bryan2020antitrust, callander2021novelty, dijk2021start}, and \cite{fumagalli2020shelving} consider the impact of financial constraints. 
Several papers consider dynamic incentives \citep{cabral2018standing, bryan2020antitrust, hollenbeck2020, denicolo2021acquisitions}. A key insight is that if the incumbent pulls too far ahead in the technology space, the pace of innovation will go down. There is also the possibility that the incumbent creates a kill-zone that disincentivizes entry, either by acquiring entrants, copying their products, or investing heavily in innovation \citep{kamepalli2021kill, teh2022acquisition, shelegia2021kill, bao2023killer}. On the empirical side, \cite{ederer2023great} show that startups increasingly favor acquisitions over IPOs as exit strategies. Finally, several papers empirically study acquisitions in the tech sector \citep{affeldt2021big, affeldt2021competitors, prado2022, jin2023, gugler2023start, eisfeld2022entry, gautier2020mergers}. 

Our paper differs from this literature by considering startups that are not potential competitors of the incumbents. The main channel through which acquisitions create inefficiencies is thus fundamentally different. In particular, the welfare loss does not occur due to the loss of competition, but rather due to inefficient allocation of talent. For regulators, this implies that it is not sufficient to show that there is no (potential) product market overlap between the target and the acquirer for the acquisition to be benign -- as it would be if only the concerns from the literature on potential competitors were present. Finally, while in the potential competitor model the anticompetitve concerns are typically strongest when there is a single incumbent, in our model this is not the case. At least two incumbents are needed in our model for the inefficiency to arise, and, as we show later on, the incentives to hoard talent can initially even increase as the number of incumbents grows beyond two.  

We do not examine why firms engage in acquihires instead of directly poaching valuable employees. This question is tackled by \cite{bar-isaac2023acquihiring}, who show that an acquihire can increase the monopsony power of the acquirer by removing the most relevant labor-market competitor. This in turn lowers wages, making acquihiring more profitable than direct hiring.  \cite{Coyle+Polsky2013} argue that firms engage in acquihires for reputational reasons, while \cite{selby2013startup} add that acquihiring is a method of acquiring  entire teams.

Our paper also relates to the empirical literature that directly studies acquihires. \cite{ouimet2020acquiring}, \citet{Ng+Stuart2021}, \citet{chen2021human}, and \citet{chen2022hiring} show that acquiring talent is indeed an important motivation for acquisitions. However, acquihired employees separate at a higher rate than regularly hired employees \citep{Ng+Stuart2021, verginer2022impact}, possibly due to a preference for working at startups or a misalignment with the acquirer's plans \citep{Kim2020startup, loh2019disruption}. This empirical finding is consistent with our theoretical result.

More broadly related is \citet{haegele2022talent}, who finds evidence of talent hoarding by managers \emph{within} firms. We identify strategic motives for talent hoarding across firms. The literature on endogenous technological spillovers caused by workers' changing jobs is also broadly related. The possibility that workers might move to the competitor influences whether multinational enterprises export or produce locally \citep{fosfuri2001foreign} and how much firms may invest in innovation \citep{gersbach2003endogenous_a,gersbach2003endogenous_b}. Also broadly related is the concept of \emph{labor} hoarding from macroeconomics, which refers to firms' employing more workers during economic contractions than is necessary for production. The firms do this to avoid incurring hiring and training costs once the economy recovers \citep[for an overview, see][]{Biddle2014}. Our model predicts that \emph{talent} hoarding implies more volatile hiring and firing decisions during economic expansions and contractions, which dampens the observed \emph{labor} hoarding during contractions. This is exactly what \citet[pp. 209-210]{Biddle2014} reports has been happening recently, especially during the Great Recession. If  talent hoarding has become more common, then our model provides a potential explanation for this observation.

\vspace{-1mm}
 
\section{Model} \label{sec:model}

Two symmetric firms $i\in \{1,2\}$ are competing in a market.\footnote{In many relevant applications there will be a dominant firm in the market. We discuss this extension in Section \ref{sec:conclusion}, where we also consider the effect of more than two firms.} There is a second market in which entrepreneur $E$'s startup is active. In the status quo, the firms' payoffs are given by $\Pi_F$ and the entrepreneur's payoff is $\pi_E$. 
Our model does not specify any direct linkage between the two markets (e.g. through consumer demand), as we prefer to consider them as orthogonal to each other. This allows us to rule out conventional competition motives for the firms when acquiring the startup, as will become clear later.

A firm can engage in an ``acquihire,'' whereby it acquires and integrates the startup by making a bid $p$ to the entrepreneur. If successful, the payoff consequences of the transaction depend on the match quality $\theta\in\{H, L\}$ between the acquirer and the startup. This match quality is the acquirer's private information, and it is drawn i.i.d. for each firm according to $\Pr(\theta=H)=1-\Pr(\theta=L)=\lambda\in (0, 1)$. Specifically, if firm $i$ with match $\theta_i$ successfully pursues an acquihire at bid $p$, its payoff is $\bar{\Pi}_F^{\theta_i} -p$, while the other firm's payoff is $\underline{\Pi}_F^{\theta_i}$ and the entrepreneur receives $p$.

We assume that an acquihire leads to an efficiency gain over the competitor.
\begin{assumption} \label{ass:A1} We assume that
	\begin{enumerate}[label=(\roman*)]
		\item $\bar{\Pi}^H_F>\Pi_F+ \pi_E>\bar{\Pi}_F^L$
		\item $ \Pi_F\geq \underline{\Pi}_F^L >\underline{\Pi}_F^H$
		\end{enumerate}
\end{assumption}
According to Assumption \ref{ass:A1}(i), the joint profits of the startup and the acquirer are highest when a high-match firm acquires the startup, second highest when the would-be acquirer does not acquire the startup, and lowest when a low-match firm acquires the startup. Assumption \ref{ass:A1}(ii) says that the profits of the non-acquiring firm are highest when its competitor does not engage in an acquihire, followed by when a low-match competitor engages in an acquihire and lowest when a high-match competitor does so. 

From the consumers' point of view, there are three possible outcomes. First, absent an acquihire all three firms are active in their respective markets. In this case, the consumer surplus arising from the competition between the two symmetric firms is $CS_F$ and that from the startup is $CS_E$. Second, a low-match acquihire results in competition between the two (now asymmetric) firms generating the entire consumer surplus ($CS_L$). Third, a high-match acquihire results in a similar competition that generates the entire consumer surplus ($CS_H$). Whenever an acquihire occurs, the consumer surplus generated by the startup ($CS_E$) in its original market is lost, although the market itself need not disappear entirely.\footnote{\label{fn:zenly}The reduction in consumer surplus due to the startup's disappearance can be substantial, even if the startup's profits are not. For an example of such a ``loved but unprofitable'' startup, consider the mapping app Zenly, which had more than 15 million daily users when it was controversially shut down by its parent company based on the ``low likelihood of the app ever turning a meaningful profit'' \citep{sifted}.} The next assumption captures the idea that as one of the two firms becomes more efficient, it passes on some of that efficiency gain to consumers.\footnote{While it is possible that this assumption does not hold, extending our analysis to those cases is straightforward but would come at the cost of more complex exposition.}

\begin{assumption} \label{ass:A2} Let $CS_H \geq CS_L \geq CS_F$. 	
\end{assumption}

Finally, the timing is as follows. At the outset, nature draws the private match types of the firms. In the first stage, firm 1 has the opportunity to attempt an acquihire. The entrepreneur can accept or reject the bid. If the entrepreneur accepts the bid, the game ends. Otherwise, we move to the second stage. In this stage, firm 2 has the opportunity to attempt an acquihire. The entrepreneur can accept or reject the bid, after which the game ends.\footnote{We discuss in Section \ref{sec:robustness} that the emergence of incentives to hoard talent does not hinge on firms' knowledge of the order of moves nor on the fact that the firm gets the full surplus from the acquihire.\label{fn:timing}}

We expect our model to be more applicable to situations in which the startup is still relatively young and not widely recognized, making the assumption that firms become aware of it sequentially more realistic. Indeed, as evidenced in the Federal Trade Commission's report on non-Hart-Scott-Rodino acquisitions by Alphabet, Amazon.com, Apple, (then still) Facebook, and Microsoft from 2010-2019, such situations seem to be quite prevalent if not the norm \citep{FTCReport2021}. The report shows that (i) approximately 65\% of all transactions cost less than \$25m, (ii) almost half of all transactions comprised startups that were less than five years old,
(iii) more than 50\% of the transactions involved targets with fewer than ten employees. Overall, this supports the notion of small, relatively young and not widely known startups.\footnote{Should the startup already be more mature or the founder(s) actively looking for an exit, a simultaneous-move model might be more appropriate. We discuss this alternative setting and its implications on the prevalence of talent hoarding in Section \ref{sec:robustness}.} Moreover, while the match quality $\theta$ is the acquirer's private information, we view this as information that has to be learned by interacting with and getting to know the startup. Consequently, a firm that does not yet know the startup (well enough) cannot engage in costly signaling to prevent a low-type competitor of preemptively acquiring the startup.\footnote{Even if a high-type acquirer could signal its type, a classical hold-up problem may arise, as it is unclear that it could credibly commit to paying more than a low-type acquirer. Indeed, it is a priori uncertain which of the two types would have a higher willingness to pay for the startup, as can be seen from the discussion of a simultaneous-move variant in Section \ref{sec:robustness}.}

\paragraph{Example.} Our reduced-form model is consistent with many standard oligopoly models. With a specific application in mind, one could fix a demand function and derive more precise results. For example, consider a Cournot duopoly with (inverse) demand function $P(q_1, q_2)=a-bq_1-bq_2$ and  constant marginal cost of production $c$. Let a high-match acquihire reduce the acquirer's marginal cost to $c-H>0$ while a low-match acquihire reduces it to $c-L$, with $H>L$. Assuming that both firms are active after a high-match acquihire (that is, $a-c>H$), it is easy to calculate the firms' profits after the various outcomes: 
\begin{align*}
   \Pi_F = \dfrac{(a-c)^2}{9b}, \quad \quad \quad \quad   \bar{\Pi}_F^H & = \dfrac{(a-c+2H)^2}{9b}, &&\bar{\Pi}_F^L  = \dfrac{(a-c+2L)^2}{9b}, \\
    \underline{\Pi}_F^H & = \dfrac{(a-c-H)^2}{9b},   &&\underline{\Pi}_F^L = \dfrac{(a-c-L)^2}{9b}
\end{align*}
It can be shown that Assumption  \ref{ass:A1}(i) is satisfied for an interval of $\pi_E$ values (which is essentially a free parameter), while \ref{ass:A1}(ii) is always satisfied. Moreover, standard calculations give us consumer surplus for the three possible outcomes:
\begin{align*}
    CS_H  = \dfrac{(2a-2c+H)^2}{18b}, \quad \quad \quad  CS_L  = \dfrac{(2a-2c+L)^2}{18b}, \quad \quad \quad  CS_F = \dfrac{(2a-2c)^2}{18b}  
\end{align*}
It follows immediately that Assumption \ref{ass:A2} is satisfied.\hfill $\square$

\section{Talent Hoarding}\label{sec:TH}

We define talent hoarding as a situation in which a firm employs a group of workers although they could be more efficiently employed elsewhere. In our model, talent hoarding occurs whenever a low-match firm acquires and integrates the startup because the employees of the startup would generate higher profits if it remained operational. Moreover, if the acquiring firm's competitor turns out to be a high match with the startup, the forgone efficiency is  even greater.

\begin{proposition}[Talent hoarding] \label{prop:AHonly}
	Under Assumption \ref{ass:A1},  firm 1's behavior in any perfect Bayesian equilibrium (PBE) is uniquely specified. Namely, if firm 1 is a high match with the startup, it will pursue an acquihire; if it is a low match, it will pursue an acquihire if and only if
\begin{align} \label{eq:lambda}
	\lambda \geq \lambda_A\equiv\frac{\pi_E+\Pi_F-\bar{\Pi}_F^L}{\Pi_F-\underline{\Pi}_F^H}.
\end{align}
\end{proposition}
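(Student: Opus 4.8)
The plan is to solve the game by backward induction and then argue that this backward-induction outcome is the \emph{unique} PBE prescription for firm 1, the key point being that beliefs about firm 1's type never enter any payoff-relevant downstream calculation. I would begin with the terminal second stage. Since rejecting firm 2's offer ends the game at the status quo, sequential rationality forces the entrepreneur to accept any bid $p_2 \geq \pi_E$ (adopting the standard convention that an indifferent entrepreneur accepts, which is needed for existence). Firm 2 therefore acquires at the lowest acceptable bid $p_2 = \pi_E$ whenever $\bar{\Pi}_F^{\theta_2} - \pi_E \geq \Pi_F$ and stays out otherwise. By Assumption \ref{ass:A1}(i), $\bar{\Pi}_F^H > \Pi_F + \pi_E > \bar{\Pi}_F^L$, so a high-match firm 2 strictly prefers to acquire and a low-match firm 2 strictly prefers to stay out; firm 2's behavior is uniquely pinned down, and crucially the entrepreneur collects exactly $\pi_E$ in either case.

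Next I would compute the entrepreneur's continuation value at stage 1. Because the two match types are drawn i.i.d., firm 2's type (and hence its stage-2 action) is statistically independent of firm 1's type, so the entrepreneur's expected payoff from rejecting firm 1 equals $\lambda \pi_E + (1-\lambda)\pi_E = \pi_E$, irrespective of any belief about firm 1. This is the observation that collapses the PBE: no bid by firm 1 can be payoff-relevantly informative, so in every PBE the stage-1 acceptance rule is again ``accept iff $p_1 \geq \pi_E$.'' Firm 1 therefore acquires, if at all, at price $\pi_E$, and its decision reduces to comparing the acquisition payoff $\bar{\Pi}_F^{\theta_1} - \pi_E$ with the no-acquisition continuation $\lambda \underline{\Pi}_F^H + (1-\lambda)\Pi_F$, where independence lets firm 1 assign prior probability $\lambda$ to a high-match firm 2.

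The remaining work is the two comparisons. For $\theta_1 = H$, Assumption \ref{ass:A1}(i) gives $\bar{\Pi}_F^H - \pi_E > \Pi_F$, while \ref{ass:A1}(ii) gives $\underline{\Pi}_F^H < \Pi_F$ and hence $\lambda \underline{\Pi}_F^H + (1-\lambda)\Pi_F \leq \Pi_F$; chaining these shows a high-match firm 1 strictly prefers to acquire for every $\lambda$. For $\theta_1 = L$, I would rearrange $\bar{\Pi}_F^L - \pi_E \geq \lambda \underline{\Pi}_F^H + (1-\lambda)\Pi_F$ into $\lambda(\Pi_F - \underline{\Pi}_F^H) \geq \pi_E + \Pi_F - \bar{\Pi}_F^L$; since \ref{ass:A1}(ii) guarantees $\Pi_F - \underline{\Pi}_F^H > 0$, dividing preserves the inequality and yields exactly $\lambda \geq \lambda_A$.

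I expect the main obstacle to be the uniqueness claim ``in any PBE'' rather than the arithmetic. The substantive step is verifying that on- and off-path beliefs about firm 1's type are genuinely irrelevant -- that firm 2's optimal action and the entrepreneur's acceptance decisions depend only on the realized match of the \emph{acquiring} firm and never on firm 1's type -- so that no signaling equilibrium with alternative firm-1 behavior can arise. The only non-generic points are the knife-edge indifferences (the entrepreneur facing a bid of exactly $\pi_E$, and firm 1 at $\lambda = \lambda_A$), which I would resolve with the usual tie-breaking conventions implicit in the weak inequality in the statement.
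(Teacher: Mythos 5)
Your proof is correct and follows essentially the same backward-induction argument as the paper: firm 2 acquires if and only if it is a high match regardless of beliefs, the entrepreneur's reservation value is $\pi_E$ at both stages, and firm 1's comparison of $\bar{\Pi}_F^{\theta_1}-\pi_E$ against $\lambda\underline{\Pi}_F^H+(1-\lambda)\Pi_F$ yields exactly the threshold $\lambda_A$. You simply make explicit two points the paper leaves implicit -- why the entrepreneur accepts any bid of $\pi_E$ at stage 1 and why beliefs about firm 1's type are payoff-irrelevant, which underpins the uniqueness claim.
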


\vspace{2mm}

\begin{proof}
	Suppose firm 1 has not done an acquihire. It follows from Assumption \ref{ass:A1} that, for any belief, firm 2 does an acquihire if and only if it is a high type. Moving to stage 1, firm 1's belief is given by the prior. A high-match firm 1 will always do an acquihire by Assumption \ref{ass:A1}. A low-match firm 1  will do so whenever $\bar{\Pi}_F^L - \pi_E \geq \lambda \underline{\Pi}_F^H + (1-\lambda)\Pi_F$ or, equivalently,
	\begin{align} \notag
		\lambda>\lambda_A\equiv\frac{\pi_E+\Pi_F-\bar{\Pi}_F^L}{\Pi_F-\underline{\Pi}_F^H}.
	\end{align}
	Note that $\pi_E$ is firm 1's bid for the startup, leaving the entrepreneur just indifferent between accepting and not accepting.
\end{proof}

\vspace{3mm}

The result in Proposition \ref{prop:AHonly} shows that talent hoarding may occur when (i)  $\bar{\Pi}_F^L  - \underline{\Pi}_F^H > \pi_E$ so that $\lambda_A<1$, and (ii) the probability of a high match is sufficiently high. The condition $\bar{\Pi}_F^L  - \underline{\Pi}_F^H > \pi_E$ guarantees that the gain for a low-match firm from  an acquihire when facing a high-match competitor is bigger than the cost of the acquihire. However, since a low-match firm makes a negative profit from the acquihire per se, it will only proceed when facing a high-match competitor is likely enough.\footnote{In classical labor models where firm-employee match value matters \citep[e.g.,][]{Jovanovic1979}, firms only care about their own match value. In our model, because of  oligopolistic competition, the potential match value between the competitor and the worker is driving the results.} Effectively, a low-match firm 1 is willing to overpay when making the acquihire to prevent the potential emergence of a highly competitive firm 2. While a low-match firm 1 does reap some efficiency gain from the acquihire, it is the threat of a more competitive firm 2 that motivates the acquihire. Thus, talent hoarding is more likely if the price of the acquisition $\pi_E$ is low and the probability of a high-match competitor $\lambda$ is high.\footnote{One may wonder why the low-match firm does not keep the startup operational or what would happen if the startup was \emph{also} valuable because of its technology. We discuss these issues in Section \ref{sec:extensions}. Briefly, we argue that operating the startup as a subsidiary might be less profitable than integration due to moral-hazard issues. Indeed, after the acquisition of Zenly by Snap (see footnote \ref{fn:zenly}), the startup was kept essentially independent. One can interpret the owner's decision to shut down the startup as recognizing that Zenly's autonomy encouraged risky expansion over profitability. Further, we also show that when startups own valuable technology, the firms  have no incentive to hoard technology but  still hoard talent.}

The discussion so far has focused exclusively on the firms. We now turn to the implications of talent hoarding for consumers. Following an acquihire, the consumer surplus generated by the startup vanishes. Thus, whether consumers benefit from the acquihire, and hence what  the appropriate response of the regulators is, depends on the change in consumer surplus created by the competition between the firms following the acquihire.

By Assumption \ref{ass:A2}, acquihires always increase consumer surplus in the two-firm market ($CS_H > CS_L > CS_F$) but lead to the loss of $CS_E$  in the startup market. If $CS_E$ is very low, so that $CS_H > CS_L > CS_F + CS_E$, then both the low-match and high-match acquihires increase consumer surplus and the policy makers should always allow acquisitions.\footnote{Observe that when $CS_E=0$ (e.g., because the startup is not viable) an acquihire always increases consumer surplus.} Similarly, if $CS_E$ is very high so that both the low-match and high-match acquihires lower consumer surplus  ($CS_F + CS_E > CS_H > CS_L $), then the policy makers should prohibit all acquisitions.

A more subtle case appears if $CS_E$ is intermediate, so that $CS_H > CS_F + CS_E > CS_L $.  Now, prohibiting a high-match acquihire would decrease consumer surplus, while prohibiting a low-match one would increase it. Hence, a policy maker unable to discern low- from high-match acquihires faces a trade-off. Our next result characterizes the optimal policy in all the cases discussed. Define 
\begin{align}
	\lambda_{CS} \equiv \dfrac{ CS_F + CS_E - CS_L}{CS_H - CS_L}.
\end{align}
How the values of this cutoff and the one defined in \eqref{eq:lambda} relate to each other are crucial for the effect of acquihires on consumer surplus.

\begin{proposition}[Effect of acquihires on consumer surplus]\label{prop:CS} $ $\begin{enumerate}[label=(\roman*)]
		\item If $CS_F + CS_E > CS_H > CS_L$, then all acquisitions reduce consumer surplus.
		\item If $CS_H > CS_L > CS_F + CS_E$, then all acquisitions increase consumer surplus.
		\item Suppose that $CS_H > CS_F + CS_E > CS_L$. Acquihires reduce consumer surplus in expectation if and only if $\lambda \in [ \lambda_{A},\lambda_{CS})$.
	\end{enumerate}
	
\end{proposition}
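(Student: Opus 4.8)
The plan is to dispose of parts (i) and (ii) by a direct sign check, and then handle (iii) by combining the per-acquihire surplus change with the equilibrium behavior already pinned down in Proposition \ref{prop:AHonly}. The first step is to record the change in consumer surplus, relative to the no-acquihire benchmark $CS_F + CS_E$, for each type of acquihire: since an acquihire replaces $CS_F + CS_E$ by $CS_H$ (high match) or $CS_L$ (low match), the relevant changes are $\Delta_H \equiv CS_H - CS_F - CS_E$ and $\Delta_L \equiv CS_L - CS_F - CS_E$. Part (i) is then immediate because $CS_F + CS_E > CS_H > CS_L$ forces $\Delta_H < 0$ and $\Delta_L < 0$, so every realized acquihire lowers consumer surplus; part (ii) follows symmetrically since $CS_H > CS_L > CS_F + CS_E$ gives $\Delta_H > 0$ and $\Delta_L > 0$. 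In the regime of (iii), $CS_H > CS_F + CS_E > CS_L$ yields $\Delta_H > 0 > \Delta_L$, so the sign of the expected effect hinges on the mix of high- versus low-match acquihires that actually occurs in equilibrium.

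For part (iii) the key step is to use Proposition \ref{prop:AHonly} to compute the equilibrium probabilities of each outcome, distinguishing the two regimes $\lambda < \lambda_A$ and $\lambda \geq \lambda_A$. When $\lambda < \lambda_A$, a low-match firm 1 abstains, so along every path a low-match firm never acquihires and only high-match acquihires are realized; the expected surplus change is thus a positive multiple of $\Delta_H > 0$ and is strictly positive, so acquihires raise expected consumer surplus. When $\lambda \geq \lambda_A$, a low-match firm 1 acquihires immediately, firm 2 never moves, and the outcome is a high-match acquihire with probability $\lambda$ and a low-match acquihire with probability $1-\lambda$. The expected surplus change is then $\lambda \Delta_H + (1-\lambda)\Delta_L$, and I would show this is negative exactly when
\begin{align*}
\lambda < \frac{-\Delta_L}{\Delta_H - \Delta_L} = \frac{CS_F + CS_E - CS_L}{CS_H - CS_L} = \lambda_{CS},
\end{align*}
using $\Delta_H - \Delta_L = CS_H - CS_L > 0$. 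Combining the two regimes, the expected effect is negative if and only if both $\lambda \geq \lambda_A$ (so that low-match acquihires arise at all) and $\lambda < \lambda_{CS}$, i.e. $\lambda \in [\lambda_A, \lambda_{CS})$; this also correctly returns the empty interval, and hence no adverse effect, whenever $\lambda_A \geq \lambda_{CS}$.

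The main obstacle is the bookkeeping of equilibrium paths in the $\lambda < \lambda_A$ regime, where one must verify that the sequential timing never generates a low-match acquihire: firm 1 of low match abstains by the very definition of $\lambda_A$, and firm 2, reached only after firm 1 has abstained, acquihires solely when it is of high match by Proposition \ref{prop:AHonly}. Once this is secured, the remaining work — reducing $\lambda \Delta_H + (1-\lambda)\Delta_L < 0$ to the threshold $\lambda_{CS}$ — is routine algebra, and the interpretation noted in the text (endogenous abstention of low-match firms when $\lambda$ is small) falls out directly from the regime split.
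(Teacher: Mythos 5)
Your proof is correct and follows essentially the same route as the paper's: parts (i) and (ii) by direct sign comparison, and part (iii) by splitting on the equilibrium regimes $\lambda<\lambda_A$ versus $\lambda\geq\lambda_A$ from Proposition \ref{prop:AHonly} and reducing $\lambda CS_H+(1-\lambda)CS_L<CS_F+CS_E$ to the threshold $\lambda_{CS}$. The $\Delta_H,\Delta_L$ notation is a cosmetic repackaging of the same algebra, so there is nothing substantive to add.
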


\vspace{2mm}

\begin{proof}
	Cases {(i)} and {(ii)} are straightforward. We demonstrate {(iii)}. When $\lambda < \lambda_A$, by Proposition \ref{prop:AHonly} only high-match firms engage in an acquihire. Since $CS_H > CS_F + CS_E$, any acquihire in this case increases consumer surplus. When $\lambda \geq \lambda_A$, both low-match and high-match firm 1 engage in an acquihire and the expected consumer surplus is 
	$
		\lambda CS_H + (1-\lambda)CS_L.
	$
	Since the expected consumer surplus when acquihires are prohibited is $CS_F + CS_E$, acquihires reduce consumer surplus if and only if $\lambda \geq \lambda_A$ and  
	\begin{align*}
		\lambda CS_H + (1-\lambda)CS_L  < CS_F + CS_E \Longleftrightarrow
		\lambda  < \dfrac{ CS_F + CS_E - CS_L}{CS_H - CS_L} \equiv \lambda_{CS}.
	\end{align*}
	Thus, acquihires reduce consumer surplus if and only if $\lambda_A \leq \lambda < \lambda_{CS}$. 
\end{proof}

\vspace{3mm}

The intuition for Proposition \ref{prop:CS} (iii) is that when $CS_H > CS_F + CS_E > CS_L$, acquihires are harmful only when there is talent hoarding (i.e., both low- and high-match firms engage in an acquihire, requiring $\lambda \geq \lambda_A$) and the probability of a high match is sufficiently low (requiring $\lambda < \lambda_{CS}$). Hence, acquihires lower expected consumer surplus only for intermediate $\lambda$, that is, when $\lambda \in [ \lambda_{A},\lambda_{CS})$. 

Figure \ref{fig:CS} illustrates Proposition \ref{prop:CS}(iii) using our Cournot example with $\pi_E=0.9$. The two panels only differ in the level of consumer surplus generated by the startup. We set $CS_E=0.4$ for the left panel and $CS_E=0.5$ for the right panel. In both panels, as $\lambda$ grows from $0$ to $\lambda_A$, the consumer surplus when acquisitions are allowed increases (the solid line). For these parameters, firms endogenously only engage in high-match acquihires, which benefit consumers. At $\lambda_A$, low-match firms start talent hoarding, causing a discontinuous drop in consumer surplus visible on both panels. However, in the left panel, $\lambda_{CS} < \lambda_{A}$, so the drop at $\lambda_A$ is not sufficient to lower the consumer surplus below the level achieved when acquisitions are prohibited (the dash-dotted line). In the right panel, $\lambda_{A} < \lambda_{CS}$, so that for all  $\lambda \in [ \lambda_{A},\lambda_{CS})$ the average consumer surplus is lower when acquihires are allowed than when they are not. As $\lambda$ increases beyond $\lambda_{CS}$, high-match acquihires become so likely that the overall consumer surplus is again higher than when acquisitions are prohibited. Thus, when $CS_H > CS_F + CS_E > CS_L$, allowing acquihires only lowers consumer surplus for \emph{intermediate} values of $\lambda$. Finally, the dashed line represents the consumer surplus that could be achieved if the regulators could differentiate between the low-match and high-match acquihires. In that case, allowing only high-match acquihires always increases consumer surplus. Of course, if the regulators can only imperfectly detect match types, that  lowers the expected consumer surplus below the dashed line.

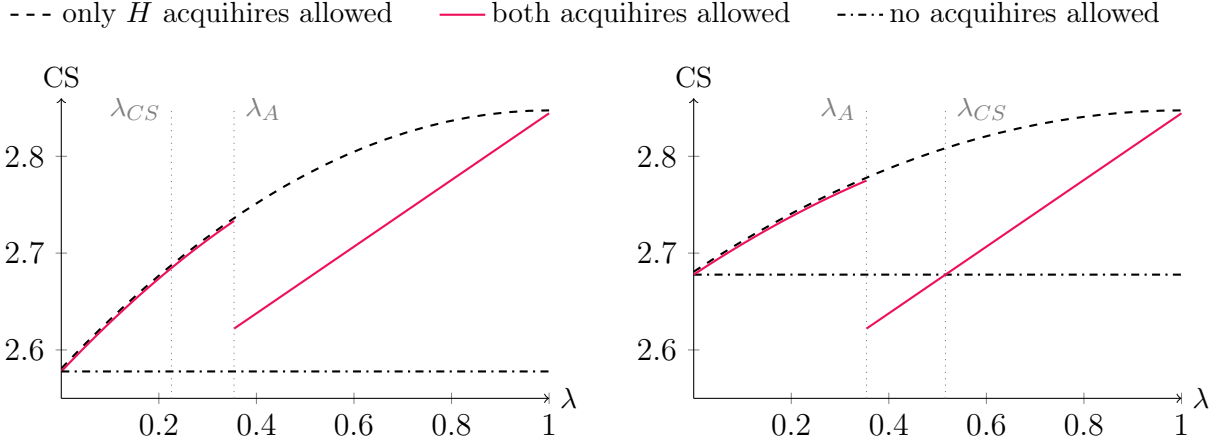
\begin{figure}[t!]
        \centering
    \begin{minipage}{0.48\textwidth}
        \centering
        \begin{tikzpicture}
            \begin{axis}[
                    width=8cm,
                    domain=0:1,
                    xmin=0, xmax=1,
                    ymin=2.55, ymax=2.86,
                    xlabel={$\lambda$},
                    ylabel={CS},
                    axis lines=middle,
                    axis line style={->},
                    xlabel style={
                    at={(axis description cs:1,0)}, anchor=west, font=\small},
                    ylabel style={at={(axis description cs:0,1)}, anchor=south, font=\small},
                    legend columns = -1,
                    legend style={
                at={(0.7,1.2)},
                anchor=south,
                draw=none,
                font=\small
                },
                    unit vector ratio=1 2
                ]
                \addplot[black, thick, dashed]{0.003 + 2.84444*(1 - (1 - x)^2) + 2.57778*(1 - x)^2}; 
                \addlegendentry[align=left]{only $H$ acquihires allowed \hspace{4mm}}
                \addplot[OrangeRed, thick, domain=0:0.354167, forget plot]{2.57778*(1 - x)^2 + 2.84444*(1 - (1 - x)^2)}; 
                \addlegendentry[align=left]{both acquihires allowed}
                \addplot[OrangeRed, thick,  domain=0.354167:1]{2.5*(1 - x) + 2.84444*x}; 
                \addplot[black, thick, dash dot]{2.57778}; 
                \addplot[color=gray, dotted] coordinates {(0.354167,2.55) (0.354167,2.85)} node[pos=1, right,font=\small] {$\lambda_A$};
                \addplot[color=gray, dotted] coordinates {(0.225806,2.55) (0.225806,2.85)} node[pos=1, left,font=\small] {$\lambda_{CS}$};
            \end{axis}
        \end{tikzpicture}
        \end{minipage}\hfill
            \begin{minipage}{0.48\textwidth}
        \centering
        \begin{tikzpicture}
            \begin{axis}[
                    width=8cm,
                    domain=0:1,
                    xmin=0, xmax=1,
                    ymin=2.55, ymax=2.86,
                    xlabel={$\lambda$},
                    ylabel={CS},
                    axis lines=middle,
                    axis line style={->},
                    xlabel style={
                    at={(axis description cs:1,0)}, anchor=west, font=\small},
                    ylabel style={at={(axis description cs:0,1)}, anchor=south, font=\small},
                    legend columns=2,
                    legend style={
                at={(0.63,1.2)},
                anchor=south,
                draw=none,
                font=\small},
                    unit vector ratio=1 2
                ]
                \addplot[black, thick, dash dot]{2.67778}; 
                \addlegendentry[align=left]{no acquihires allowed}
                \addplot[black, thick, dashed]{0.003 + 2.84444*(1 - (1 - x)^2) + 2.67778*(1 - x)^2}; 
                \addplot[OrangeRed, thick, domain=0:0.354167, forget plot]{2.67778*(1 - x)^2 + 2.84444*(1 - (1 - x)^2)}; 
                \addplot[OrangeRed, thick,  domain=0.354167:1]{2.5*(1 - x) + 2.84444*x}; 
                \addplot[color=gray, dotted] coordinates {(0.354167,2.55) (0.354167,2.85)} node[pos=1, left,font=\small] {$\lambda_A$};
                \addplot[color=gray, dotted] coordinates {(0.516129,2.55) (0.516129,2.85)} node[pos=1, right,font=\small] {$\lambda_{CS}$};
            \end{axis}
        \end{tikzpicture}
    \end{minipage}
    \caption{Consumer surplus in the Cournot example when Proposition \ref{prop:CS}(iii) applies. In the left panel $\lambda_{CS}<\lambda_A$ so that all acquihires increase consumer surplus. In the right panel $\lambda_A<\lambda_{CS}$ so that acquihires decrease consumer surplus if and only if $\lambda \in [\lambda_A, \lambda_{CS}]$.}
    \label{fig:CS}
\end{figure}

Finally, the regulator may be interested in the effect of talent hoarding on total surplus, that is, the sum of firms' profits and consumer surplus. It follows directly from our discussion of Propositions \ref{prop:AHonly} and \ref{prop:CS} that talent hoarding reduces total surplus unambiguously when the transaction reduces consumer surplus. Otherwise, the impact depends on the structure one imposes on our reduced-form model. In the Cournot example, total surplus with a low-type acquihire is always lower than with a high-type acquihire, while the total surplus without an acquihire hinges on the startup's profits and consumer surplus.

\section{Robustness and Limits to Regulation} \label{sec:robustness}

Propositions \ref{prop:AHonly} and \ref{prop:CS} provide conditions -- within our baseline framework -- under which talent hoarding is likely to occur and when acquihires are harmful for consumer surplus, respectively. Taken together, these results also shed light on the welfare impact of talent hoarding. However, it is essential to assess the extent to which these findings hold beyond our baseline framework. To this end, we explore several variations of our benchmark model. Through these extensions, we clarify the robustness and limitations of our results regarding the prevalence of talent hoarding and discuss additional factors that policymakers should consider when regulating acquihires. The formal analysis underlying this section is presented in Appendix \ref{sec:app_extensions}.

\paragraph{Simultaneous Moves.} When both firms are aware of the startup, as would be the case when the entrepreneur is already established and well-known, modeling a simultaneous-move game might be more appropriate. This scenario effectively corresponds to an auction with externalities, as described by \citet{JehielMoldovanuStacchetti1996}. Although not all of their results are directly applicable here, some insights are relevant. To begin with, consider the case of complete information and assume that one firm has a high and the other a low type. Then, the entrepreneur optimally sells the startup to the low-type firm if and only if 
\begin{align} \label{eq:simultaneous}
    \bar{\Pi}_F^L + \underline{\Pi}_F^L \geq \bar{\Pi}_F^H + \underline{\Pi}_F^H. 
\end{align}
Hence, the low-type firm outbids the high-type firm whenever the sum of their profits is greater after a low-type acquisition than after a high-type acquisition.\footnote{This condition is unlikely to hold in standard models with homogeneous goods, and in particular it is not satisfied in our Cournot example. However, it can hold in models with differentiated goods where each firm has a large base of ``loyal'' consumers.} While this outcome is efficient from an industry-wide perspective, talent hoarding still arises as the workers would be more efficiently employed at the startup.  Conversely, when the inequality in \eqref{eq:simultaneous} is reversed, the high-type firm outbids the low-type firm, thereby eliminating talent hoarding. Extending this analysis to an incomplete-information setting does not alter the qualitative insights. 

Overall, in simultaneous-move settings, talent hoarding remains a possibility but is likely to be less prevalent than in our benchmark model. Consequently, the associated welfare concerns are attenuated, suggesting that talent hoarding may not be a strong justification for blocking startup acquisitions in these situations.

\paragraph{Uncertain Order of Moves.} As previously noted, talent hoarding is more likely to occur when firms can identify high-impact individuals before their competitors. However, when a firm becomes aware of a startup with such employees, it may not know whether its competitors are also aware. To capture this uncertainty, we extend the model by introducing an unknown order of moves, where the sequence is privately determined by nature with uniform probabilities. In this setting, a symmetric equilibrium exists in which all firms undertake an acquihire regardless of their type if
\begin{align} 
		\lambda \geq \lambda_{A,U}\equiv\frac{\pi_E+\underline{\Pi}^L_F-\bar{\Pi}_F^L}{\underline{\Pi}^L_F-\underline{\Pi}_F^H}.
\end{align}
This condition is more stringent than in the benchmark model, but talent hoarding may still arise. Nonetheless, the likelihood of adverse welfare effects diminishes, as the uncertainty surrounding the order of moves reduces the prevalence of talent hoarding.

\paragraph{Emerging Competitor.} We now consider a scenario where the two incumbents do not co-exist. That is, only one incumbent is initially active, but the other firm may yet enter.\footnote{This setting with a potentially emerging competitor is reminiscent of Segal and Whinston (2000). However, allowing for ex-post renegotiation with an emerging competitor, as in their paper, has an ambiguous effect on the incentives to hoard talent. If a low-type firm could sell the startup to an emerging high-type firm, the incentives to acquire the startup may increase, as the loss incurred through the acquisition can be recouped when selling to the high-type competitor. At the same time, the possibility that the competitor does not even emerge reduces the incentives to acquihire the startup in the first place, leaving the total effect ambiguous.} Specifically, the entry of the second firm only happens with probability $q\in [0,1]$. If the second firm does not enter, and there is no acquisition, the first firm makes a monopoly profit $\Pi_M>\Pi_F$. Acquihiring the startup yields a match-dependent monopoly profit of $\bar{\Pi}_M^H$ or $\bar{\Pi}_M^L$, where---mirroring Assumption \ref{ass:A1}---we assume $\bar{\Pi}^H_M>\Pi_M+ \pi_E>\bar{\Pi}^L_M$. If the second firm does enter, the same profits as in the benchmark accrue (with or without an acquihire). In the case of an emerging competitor, the condition for talent hoarding to emerge is given by 
\begin{align}
    \lambda \geq \lambda_{A, M}\equiv\frac{\pi_E - q(\bar{\Pi}_F^L - \Pi_F)- (1-q)(\bar{\Pi}_M^L - \Pi_M)}{q(\Pi_F - \underline{\Pi}_F^H)}.
\end{align}
As $q$ increases, this condition approaches condition \eqref{eq:lambda} from the benchmark model. However, as $q$ decreases -- and thus the threat of an entrant who might acquihire the startup diminishes -- the condition becomes less likely to hold. In such a situation, adverse welfare effects are less likely than in the benchmark, as talent hoarding is less likely to occur and consumer surplus is less likely to be reduced. Finally, as the entry of the competitor is uncertain to begin with, the regulator ought to be cautious when considering the ban of startup acquisitions. 

\paragraph{Surplus Sharing} In our baseline model, the acquiring firm has all the bargaining power and thus manages to acquire the startup at a price that leaves the entrepreneur with zero surplus. In this extension, we show that talent hoarding can persist even when surplus is shared, provided the entrepreneur's bargaining power is not too large. Formally, we assume that in case of an acquihire the entrepreneur receives a share $\xi \in [0,1]$ of the surplus. Then, the condition for talent hoarding becomes 
\begin{align}
    \lambda \geq \lambda_{A,S} \equiv\frac{\pi_E+\Pi_F-\bar{\Pi}_F^L}{\Pi_F-\underline{\Pi}_F^H - \xi(\bar{\Pi}_F^H - \Pi_F-\pi_E)}.
\end{align}
Evidently, as $\xi$ and thus the entrepreneur's bargaining power increases, the condition is less likely to be satisfied. In particular, if 
\begin{align*}
	\xi > \frac{\bar{\Pi}_F^L - \underline{\Pi}_F^H-\pi_E}{\bar{\Pi}_F^H - \Pi_F-\pi_E}.
\end{align*}
the acquisition of the startup becomes prohibitively expensive for the firm and talent hoarding does not emerge in equilibrium. Once again, adverse welfare impacts are less likely. The result suggests that regulators could try to use estimates of the bargaining power in startup acquisitions to gauge whether an acquisition should be blocked or not.

\paragraph{Dominant Firm.} Instead of symmetric firms, the market could be characterized by a dominant firm and a challenger. We show that the incentives to hoard talent also emerge in this asymmetric setting. Specifically, suppose that the dominant firm both stands to lose more if a startup is acquired by the challenger (maybe because the dominant firm has a larger market share, so it has more to lose) and stands to gain more by acquihiring the startup itself (e.g., because its larger market share allows it to deploy improvements more rapidly to a broader consumer base). Moreover, assuming that a dominant firm makes more profit than symmetric firms (i.e., $\Pi_D \geq \Pi_F$) and is more likely to identify startups than its competitor, then talent hoarding is more likely to emerge than in the benchmark model, as the relevant condition now reads 
		\begin{align}			\lambda>\lambda_{A,D}\equiv\frac{\pi_E+\Pi_D-\bar{\Pi}_D^L}{\Pi_D-\underline{\Pi}_D^H},
		\end{align}
which is less stringent than condition \eqref{eq:lambda}. This suggests that acquisitions by dominant firms may deserve more scrutiny by the regulator, as they are more likely to lead to talent hoarding and negative effects on consumer welfare.  However, an outright ban on such acquisitions may not be justified, as acquisitions by challenger firms are less likely to produce these adverse outcomes. 

\paragraph{Multiple Firms.} Finally, we examine markets with multiple firms that may sequentially attempt to acquihire the startup. Focusing on the Cournot-Oligopoly setting, we show that in the limit, as the number of firms grows large, talent hoarding ceases to take place. Intuitively, the profit at risk from a competitor's acquihire becomes smaller as the number of competitors increases, reducing the incentive to engage in costly talent hoarding. However, the effect of the increase in the number of competitors on the incentive to hoard talent is not necessarily monotonic. The reason for the non-monotonicity is that an increase in the number of competitors, in addition to decreasing profit at risk, also increases the probability that a high-match competitor will acquire the startup. For a small number of firms, it is unclear which effect dominates. Indeed, we show in a parametric example that talent hoarding may be more prevalent with three firms than with two. Even if talent hoarding may arise with more than two firms, the concerns about adverse welfare effects should be less pronounced when markets are more competitive.

\section{Hiring and Layoffs}

We now discuss the implications of talent hoarding on the hiring and layoffs of acquihired employees. To do so, we expand our baseline model by introducing a second period and allowing for economic downturns between the two periods.

The first period of this expanded model is identical to the baseline model in Section \ref{sec:model}: the firms' match types are private information, and firm 1 has the opportunity to pursue an acquihire before firm 2. Between the periods, the economy suffers a downturn with probability $\delta\in (0,1)$. If a downturn materializes -- an event that is publicly observable -- the firms may be hit by adverse shocks (see details below). In period 2, the entrepreneur has the option of creating a new startup, once more leading to an outside option of $\pi_E$ for her.\footnote{There is empirical evidence that acquihired employees who leave the acquirer are likely to join a new startup \citep{Ng+Stuart2021, kim2022startup, Kim2020startup}.} If the entrepreneur was employed by a firm in period 1, that firm must decide whether to continue the relationship (at the cost of $\pi_E$) or lay off the entrepreneur, who might then be hired by the other firm. If there was no acquihire in period 1, it is again firm 1 that moves first in period 2.

The adverse shocks come with a commonly known joint probability distribution over ``downgrades'' for firms, though each firm only observes the realization of its own shock. More specifically, firm $i$ is hit by a shock $S_i \in \{D, N\}$, where it is either downgraded from high to low match (if possible) or not affected by the shock. Thus, if a low-match firm is hit by a downgrade, it stays a low-match firm, while a high-match firm turns into a low-match firm. If a firm is not affected by the shock, its match quality remains the same.\footnote{Of course, an economy-wide adverse shock may affect more parameters than just the profitability of the high-match firm (for example, it may impact the profitability of startups or firms even in the absence of any acquisitions). We are focusing on the impact on the high-match firms as they will presumably seek to expand, the profitability of which is particularly likely to be impacted by a negative shock to the economy.} Let $(S_1, S_2) \in \{ D, N\}^2$ be the profile of shocks hitting the firms, which follows the distribution 
\begin{align*}
	\Pr(D, D) &= r \gamma(1-\gamma)  +\gamma^2, &&\Pr(D, N) = (1-r) \gamma(1-\gamma), \\
	\Pr(N, D) &= (1-r) \gamma(1-\gamma), 
	&&\Pr(N, N) = r \gamma(1-\gamma)  +(1-\gamma)^2,
\end{align*}
where $\gamma\in(0,1)$ is the probability that a firm  will be downgraded and $r\in [0,1]$ measures the positive correlation between the firms' shocks. In particular, for $r=0$ the shocks are independent, and for $r=1$ they are perfectly positively correlated. 

To make a comparative statement, we need a benchmark relative to which we can compare the hiring and layoffs in our model with talent hoarding. To do so, consider the case in which $\Pi_F = \underline{\Pi}_F^H = \underline{\Pi}_F^L$ so that an acquihire by firm $i$ does not affect firm $j$'s profits. Thus, there are no incentives to hoard talent in this benchmark. 

Before we state the formal result, consider the following intuition. If the entrepreneur was hired by a high-match firm in period 1 and this firm is not affected by the economic downturn (i.e., remains high-match), the firm will continue to employ the entrepreneur. Therefore, no layoff is observed in that case.\footnote{In the video game industry, \cite{loh2019disruption} document that when the skills of the employees and the needs of the acquirer match well, the employees are more likely to stay with the acquirer.} In contrast, if the entrepreneur was hired by a low-match firm in period 1 or a downgraded high-match firm, several period 2 outcomes can arise. Talent-hoarding motives may induce the continued employment of the entrepreneur if the competitor is believed to have a high match value with a high enough probability. Thus, we do not observe any layoff of the entrepreneur. Otherwise, we may observe a layoff of the entrepreneur, who is subsequently hired by a high-match competitor. Hence, while we observe a separation, the entrepreneur remains employed in the industry. Finally, the entrepreneur may be laid off and not hired by the competitor, resulting in the entrepreneur leaving the industry. Note that these distinct period 2 outcomes, in turn, affect the behavior of low-match firms in period 1, changing the acquihire threshold. Solving the game fully and deriving the probabilities of period 1 hiring as well as observing a layoff in period 2, we obtain the following result, proved in the appendix. 

\begin{proposition}[Effect on employment outcomes] \label{prop:labor}
	The presence of talent-hoarding motives always leads to weakly more hiring than in the benchmark. Additionally, provided that  $\min\left\{\frac{\lambda_A}{\lambda}, \frac{1-\lambda}{\lambda}\right\} > (1-r)(1-\gamma)$, talent hoarding also leads to weakly more layoffs than in the benchmark.
\end{proposition}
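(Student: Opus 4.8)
The plan is to solve the two-period game by backward induction: pin down the period-2 subgame first and then fold its continuation values into firm 1's period-1 decision. In period 2, a firm that holds the entrepreneur and is currently a high match strictly prefers to retain, since $\bar{\Pi}_F^H-\pi_E>\Pi_F\geq\underline{\Pi}_F^{\theta}$, so it never lays off. A low-match holder instead compares retaining, worth $\bar{\Pi}_F^L-\pi_E$, with laying off, worth $\mu\underline{\Pi}_F^H+(1-\mu)\Pi_F$, where $\mu$ is its posterior that the rival is a high match (a high-match rival re-hires after a layoff, a low-match one does not). Rearranging reproduces the cutoff of Proposition \ref{prop:AHonly}: the holder retains iff $\mu\geq\lambda_A$ and lays off otherwise. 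The benchmark is the degenerate case $\Pi_F=\underline{\Pi}_F^H=\underline{\Pi}_F^L$, i.e. $\lambda_A=\infty$, so there a low-match holder always lays off and low-match firms never acquihire.

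The crux is computing $\mu$ from the joint shock distribution. The identity $\Pr(S_j=N\mid S_i=D)=(1-r)(1-\gamma)$ implies that a holder downgraded to low match (own shock $S_i=D$) assigns probability $\lambda(1-r)(1-\gamma)$ to the rival being a high match in period 2, since that event is ``rival high in period 1 and $S_j=N$'' and the period-1 type draw is independent of the shocks. I would also record the no-downturn posterior $\mu=\lambda$ and the $S_i=N$ posterior $\mu=\lambda(1-\gamma(1-r))$, and note that each posterior drops below $\lambda_A$ precisely when $(1-r)(1-\gamma)$ is small, i.e. when shocks are likely ($\gamma$ high) or strongly correlated ($r$ high).

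Returning to period 1, a high-match firm 1 always acquihires, while a low-match firm 1 acquihires iff $\lambda$ exceeds a modified cutoff $\lambda_A'$ obtained by adding the holding-minus-nonholding continuation values to the one-shot comparison. Part 1 then follows cleanly and independently of $(\delta,\gamma,r)$: in the benchmark only high-match firms acquihire, so the entrepreneur is hired with probability $1-(1-\lambda)^2<1$, whereas once low-match firm 1 also acquihires the period-1 hiring probability is one; hence talent-hoarding motives raise hiring for all shock parameters.

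For Part 2 I would write the layoff probabilities explicitly. In the benchmark the acquirer is always a high match and always lays off after a downgrade, so $L_{\mathrm{B}}=[1-(1-\lambda)^2]\,\delta\gamma$; under talent hoarding firm 1 also acquihires when low, and each resulting low-match holder lays off exactly when its posterior falls below $\lambda_A$. The two inequalities packaged in $\min\{\lambda_A/\lambda,(1-\lambda)/\lambda\}>(1-r)(1-\gamma)$ should be shown to guarantee, respectively, that these downgraded and low-match holders do lay off (their posterior $\lambda(1-r)(1-\gamma)<\lambda_A$) and that talent hoarding actually occurs so that the extra hires exist, after which $L_{\mathrm{TH}}>L_{\mathrm{B}}$ follows with $(1-\lambda)^2$-type slack. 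I expect the main obstacle to be exactly this accounting: tracking every posterior across the period-1 history and shock realization, reconciling the cutoff $\lambda_A'$ with the period-2 layoff incentives, and verifying that the $\min$ condition is the correct joint requirement rather than over- or under-counting the retained versus laid-off holders.
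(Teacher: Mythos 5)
Your backward-induction skeleton coincides with the paper's: the period-2 retention rule (retain iff the posterior that the rival is a high match is at least $\lambda_A$), the three posteriors $\lambda$, $\lambda(1-\gamma(1-r))$ and $\lambda(1-r)(1-\gamma)$, the benchmark layoff rate $\delta(2\lambda-\lambda^2)\gamma$, and the hiring comparison in Part 1 all match the paper exactly. The gap is that Part 2 is left as a plan, and your conjectured reading of the hypothesis is wrong in two places. First, the inequality $(1-\lambda)/\lambda>(1-r)(1-\gamma)$ does \emph{not} ``guarantee that talent hoarding actually occurs'': it is an \emph{upper} bound on $\lambda$, whereas talent hoarding requires $\lambda$ to exceed a downturn-adjusted threshold $\lambda_A^k\geq\lambda_A$ that the paper derives case by case and then verifies is implied by the case definitions. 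In the paper's proof the layoff comparison goes through in both surviving cases without that second inequality at all: in Case 2 the layoff probability is $\delta\gamma>\delta(2\lambda-\lambda^2)\gamma$, and in Case 3 it is $\delta(1-\lambda(1-\gamma))$, with $(1-\lambda)(1-\gamma\lambda)>0$ delivering the comparison. The condition $(1-\lambda)/\lambda>(1-r)(1-\gamma)$ is instead exactly what one needs to compare the \emph{industry-exit} probabilities (the paper's $u^*$), which the proof computes but the proposition does not assert; attributing it to the existence of extra hires would not survive an attempt to write the argument out.

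Second, your claim that the condition forces all ``downgraded and low-match holders'' to lay off misses the intermediate case $\lambda(1-\gamma(1-r))>\lambda_A>\lambda(1-r)(1-\gamma)$: there a low-match holder that escapes the shock \emph{retains} the entrepreneur, and only $D$-shocked holders separate, so the layoff probability is $\delta\gamma$, not the probability of being a low-match holder. The comparison with the benchmark still succeeds only because a benchmark layoff requires both a period-1 hire (probability $2\lambda-\lambda^2<1$) and a downgrade. This case split, together with the verification that the period-1 acquihire thresholds $\lambda_A^1\geq\lambda_A^2\geq\lambda_A^3=\lambda_A$ are met in the relevant cases, is precisely the accounting you defer, and it is where the proof actually lives.
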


The increase in hiring follows immediately because in the presence of talent hoarding, not only high-match firms but also low-match firms may pursue an acquihire. The increase in layoffs is more subtle. 
Essentially, when either the correlation between firms' adverse shocks or the (marginal) probability of suffering a downgrade $\gamma$ is sufficiently high, talent hoarding raises layoffs. In the case in which $r$ is high, this is because firm 1's shock is informative of firm 2's shock because of the correlation, hence allowing firm  1, whenever it draws a negative shock, to forgo the costly talent hoarding in the second period. 
Similarly, when $\gamma$ is sufficiently high, firm 1 can be fairly confident that whenever a downturn occurs, the competitor will be downgraded, once more prompting the firm to forgo talent hoarding. 
Further, when $\gamma$ or $r$ is sufficiently high, firm 1 is often right in laying off the entrepreneur, as firm 2 will indeed have a low match, which in turn will lead to the entrepreneur exiting the industry. Finally, all statements in the proposition are strict whenever $\lambda$ is sufficiently high so that any talent hoarding at all takes place.

The period towards the end of the COVID pandemic serves as a good example for the effect we aim to capture. Analysts have argued that rapid hiring by tech companies during the pandemic was fueled by a massive positive demand shock tied to work-from-home trends and by historically low interest rates that made borrowing inexpensive. As an adverse shock in the form of employees returning to the office and increased interest rates hit big tech, companies like Google were left with over-expanded workforces, prompting widespread layoff \citep{NYT,fDi}. Our model suggests that the size of these layoffs may have been magnified by talent hoarding. Indeed, John Gruber, a tech commentator,  wrote the following: ``There are numerous reasons the tech industry wound up at this layoffpalooza, but I think the main reason is that the biggest companies got caught up in a game where they tried to hire everyone, whether they needed them or not, to keep talent away from competitors and keep talent away from small upstarts (or from founding their own small upstarts). These big companies were just hiring to hire, and now the jig is up.'' \citep{Gruber_layoffs} Sam Lessin, the founder of Drop.io mentioned in the introduction, expressed a similar sentiment in an interview \citep{Lessin_layoffs}. Overall, while the main economic forces that our model identifies are unlikely to capture the complete story behind these layoffs, they align with recent developments and are supported by industry insiders' observations.

\section{Other Extensions}\label{sec:extensions}

We now extend our baseline model in two directions, both of which lead to more talent hoarding. First, we allow for the situation in which the startup is valuable to the firms not only because of its employees but also because of its technology. In doing so, we emphasize the difference between labor and other inputs in that firms do not acquire property rights over labor. Thus, in contrast to the labor, the technology can be sold (or licensed) after an acquisition. Second, we consider partial acquisitions, or investments, into startups. In this case, the acquirer does not acquire the startup in its entirety but only a fraction of it and in turn receives a fraction of the startup's profits as well as some decision power. We show that when the decision power suffices to block acquisitions by competitors, such investments constitute an attractive alternative to full acquihires. As it turns out, the possibility of investments is more likely to lead to inefficient market outcomes, albeit at a lower degree of inefficiency.

\subsection{People and Technology}\label{sec:technology}

Consider a situation where the total value of the startup consists of the people who work for the startup and the technology owned by the startup. The fundamental difference between the employees and the technology, from the acquirer's point of view, is that technology can be sold (or licensed), while the people cannot. In our model, this implies that the acquirer can resell the startup's technology to the competitor, whenever such a sale increases joint profits. Suppose that the share of the value of the startup generated by the technology is $\tau \in [0,1]$, while the share generated by employees is $(1-\tau)$. Moreover, for simplicity, assume that acquiring just the technology (or just the employees) generates $\tau$ (or $1-\tau$) of the impact that acquiring the entire startup would have. Just as before, a firm can have a high match value with the startup with probability $\lambda$, where we assume for simplicity that the match value of the startup to a firm applies to both the people and the technology identically. The match value of the firm is private information at the beginning of the game. The timing of the game in stage 1 is:
\begin{enumerate}
	\item Firm 1 observes the match quality with the startup and makes an acquisition of the startup at price $p$ or does nothing.
	\item The startup accepts or rejects the bid. 
	\item If the bid is rejected, the game proceeds to stage 2. If accepted, firm 1 can sell the startup's technology at the price $q$ to firm 2. 
\end{enumerate}
Stage 2 is like stage 1 but the roles of firms 1 and 2 are reversed. To accommodate the possibility of selling the startup's technology, we slightly adapt the notation from the baseline model. Suppose firm 1 with match $\theta_1$ did an acquisition at price $p$. Absent any sale of the startup's technology, profits read 
\begin{align*}
	\text{Firm} ~1:&\; \Pi_F + \bar{\pi}_F^{\theta_1}-p\\
	\text{Firm} ~2:& \;\Pi_F-  \underline{\pi}_F^{\theta_1}\\
	\text{Startup} : &\; p,
\end{align*}
which coincides with profits in the baseline model (although the notation is different).\footnote{For instance, the payoff of an acquiring firm with match type $\theta_1$ in the main text is $\bar{\Pi}_F^{\theta_1}$ while it now reads $\Pi_F + \bar{\pi}_F^{\theta_1}$.} If the technology part is sold at price $q$, the profits read
\begin{align*}
	\text{Firm} ~1:& \;\Pi_F + (1-\tau)\bar{\pi}_F^{\theta_1} - \tau\underline{\pi}_F^{\theta_2} -p+q\\
	\text{Firm} ~2:& \;\Pi_F-  (1-\tau)\underline{\pi}_F^{\theta_1}+ \tau \bar{\pi}_F^{\theta_2}-q\\
	\text{Startup} : & \;p.
\end{align*}
Thus, the people who have joined firm 1 from the startup increase firm 1's profits and decrease firm 2's profits, respectively. Conversely, the startup's technology increases firm 2's and decreases firm 1's profits, respectively. 

To simplify the model and the exposition, we do not explicitly model the bargaining process between the two firms. Instead, we assume that the two firms meet at the bargaining table, their types are revealed and the resulting surplus from selling the technology is shared equally. The surplus resulting from the sale of the technology is then given by
\begin{align*}
	\Pi_F &+ (1-\tau)\bar{\pi}_F^{\theta_1} - \tau\underline{\pi}_F^{\theta_2} -p+\Pi_F-  (1-\tau)\underline{\pi}_F^{\theta_1}+ \tau \bar{\pi}_F^{\theta_2}-\left(\Pi_F + \bar{\pi}_F^{\theta_1}-p+ \Pi_F-  \underline{\pi}_F^{\theta_1}\right)\\
	=& \tau\left(\bar{\pi}_F^{\theta_2}+ \underline{\pi}_F^{\theta_1}-\bar{\pi}_F^{\theta_1}-\underline{\pi}_F^{\theta_2}\right).
\end{align*}
The case we are interested in, is when a low-match firm 1 sells technology to a high-match firm 2. Then, the surplus reads $\tau\left(\bar{\pi}_F^{H}+ \underline{\pi}_F^{L}-\bar{\pi}_F^{L}-\underline{\pi}_F^{H}\right)$. We now make two assumptions on profits. The first simply restates Assumption 1 into this section's notation. The second ensures that the surplus resulting from a technology sale from a low-match to a high-match firm is positive.

\begin{assumption} \label{ass:A3}\quad
	$
		\bar{\pi}_F^{H} > \pi_E > \bar{\pi}_F^{L} ~\text{and}~
	\underline{\pi}_F^{H} > \underline{\pi}_F^{L}\geq 0.
$
\end{assumption}

\begin{assumption}\label{ass:A4}\quad
$
    \bar{\pi}_F^{H}+ \underline{\pi}_F^{L}-\bar{\pi}_F^{L}-\underline{\pi}_F^{H}>0.
$		
\end{assumption}

Finally, to break ties, we assume that firms of the same match type do not trade the startup's technology. We obtain the following result

\begin{proposition} \label{prop:people_tech}
	Under Assumptions \ref{ass:A3} and \ref{ass:A4},  firm 1's behavior in any PBE is uniquely specified. Namely, if firm 1 has a high match, it will make an acquisition and not sell the technology; if it has a low match, it will make an acquisition and sell the startup's technology to a high-match (but not a low-match) firm 2 if and only if 
	\begin{align*}
		\lambda \geq \lambda_A(\tau) \equiv \frac{\pi_E - \bar{\pi}_F^L}{\underline{\pi}_F^H + \frac{\tau}{2}\left(\bar{\pi}_F^{H}+ \underline{\pi}_F^{L}-\bar{\pi}_F^{L}-\underline{\pi}_F^{H}\right)}
	\end{align*}
 and do nothing otherwise.
\end{proposition}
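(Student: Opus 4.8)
The plan is to solve the two-stage game by backward induction, mirroring the proof of Proposition \ref{prop:AHonly} but carrying along the additional technology-resale step. First I would settle the resale decision itself. For an acquirer of type $\theta_1$ facing a counterparty of type $\theta_2$, the surplus generated by a resale is $\tau(\bar{\pi}_F^{\theta_2}+\underline{\pi}_F^{\theta_1}-\bar{\pi}_F^{\theta_1}-\underline{\pi}_F^{\theta_2})$, which I would evaluate at the four type combinations. By Assumption \ref{ass:A4} this surplus equals $\Delta:=\tau(\bar{\pi}_F^{H}+\underline{\pi}_F^{L}-\bar{\pi}_F^{L}-\underline{\pi}_F^{H})>0$ exactly when a low-match firm holds the startup and the counterparty is high-match, it equals $-\Delta<0$ in the reverse case, and by the tie-breaking convention it is zero (no trade) when the two types coincide. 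Hence a high-match acquirer never resells and a low-match acquirer resells only to a high-match partner, capturing $\tfrac{1}{2}\Delta$ under equal sharing. This pins down the resale part of the claim.

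Next I would analyze stage 2 conditional on firm 1 not having acquired, as a function of firm 2's posterior $\mu$ that firm 1 is high-match. Using $\bar{\pi}_F^H>\pi_E$ from Assumption \ref{ass:A3}, a high-match firm 2 strictly prefers to acquire (payoff $\Pi_F+\bar{\pi}_F^H-\pi_E>\Pi_F$) and, by the resale computation, keeps the technology. A low-match firm 2 that acquires earns $\Pi_F+\bar{\pi}_F^L-\pi_E+\mu\tfrac{1}{2}\Delta$, since it profits from resale only when firm 1 is high-match; as $\bar{\pi}_F^L<\pi_E$, it acquires only if $\mu\ge\mu^\ast:=2(\pi_E-\bar{\pi}_F^L)/\Delta$. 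In the candidate equilibrium high-match firm 1 acquires in stage 1, so reaching stage 2 gives $\mu=0$ on path and a low-match firm 2 does not acquire.

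I would then compute firm 1's stage-1 payoffs. A high-match firm 1 that acquires and does not resell gets $\Pi_F+\bar{\pi}_F^H-\pi_E$; a low-match firm 1 that acquires resells to a high-match firm 2 (probability $\lambda$) and obtains $\Pi_F+\bar{\pi}_F^L-\pi_E+\lambda\tfrac{1}{2}\Delta$; and not acquiring yields $\Pi_F-\lambda\underline{\pi}_F^H$ once firm 2 plays ``acquire iff high.'' Comparing the latter two gives the low-match firm's acquisition condition $\lambda(\underline{\pi}_F^H+\tfrac{1}{2}\Delta)\ge \pi_E-\bar{\pi}_F^L$, which is precisely $\lambda\ge\lambda_A(\tau)$ after substituting $\Delta$.

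The delicate part, and the step I expect to be the main obstacle, is uniqueness across all PBE, because the resale option gives a low-match firm 2 a reason to acquire that is absent in the baseline and is governed by off-path beliefs. The cleanest route I see is to split on whether the resale value is strong enough to tempt a low-match firm 2 to acquire, i.e. on the sign of $\tfrac{1}{2}\Delta-(\pi_E-\bar{\pi}_F^L)$ (equivalently $\mu^\ast\gtrless 1$). When $\mu^\ast>1$, no belief can induce a low-match firm 2 to acquire, so every non-acquire continuation gives firm 1 at most $\Pi_F-\lambda\underline{\pi}_F^H<\Pi_F+\bar{\pi}_F^H-\pi_E$, and a high-match firm 1 strictly prefers to acquire. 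When $\mu^\ast\le 1$, I would bound a high-match firm 1's best non-acquire payoff by $\lambda(\Pi_F-\underline{\pi}_F^H)+(1-\lambda)(\Pi_F-\underline{\pi}_F^L+\tfrac{1}{2}\Delta)$ and show, using that this same inequality $\tfrac{1}{2}\Delta\ge\pi_E-\bar{\pi}_F^L$ together with $\underline{\pi}_F^H>\underline{\pi}_F^L\ge 0$ forces the bound strictly below $\Pi_F+\bar{\pi}_F^H-\pi_E$ (the bound is affine in $\lambda$, so checking the endpoints suffices). Finally, for a low-match firm 1 the resale option never raises its non-acquire payoff, since reselling to a low-match firm is never profitable and any belief inducing firm 2 to acquire only lowers firm 1's payoff to $\Pi_F-\underline{\pi}_F^L\le\Pi_F$ in the firm-2-low branch; hence its relevant non-acquire payoff is at most $\Pi_F-\lambda\underline{\pi}_F^H$ regardless of beliefs, and the cutoff $\lambda_A(\tau)$ governs its behavior in every PBE.
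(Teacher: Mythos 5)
Your core derivation is correct and follows the same backward-induction route as the paper: you pin down the resale decision (trade occurs only from a low-match holder to a high-match counterparty, with the seller capturing half of $\tau\bigl(\bar{\pi}_F^{H}+\underline{\pi}_F^{L}-\bar{\pi}_F^{L}-\underline{\pi}_F^{H}\bigr)$), then solve stage 2, then compare a low-match firm 1's acquire payoff $\Pi_F-\pi_E+\bar{\pi}_F^L+\lambda\tfrac{\tau}{2}\bigl(\bar{\pi}_F^{H}+\underline{\pi}_F^{L}-\bar{\pi}_F^{L}-\underline{\pi}_F^{H}\bigr)$ against $\Pi_F-\lambda\underline{\pi}_F^H$, which is exactly the paper's computation of $\lambda_A(\tau)$. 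Where you genuinely depart from the paper is in stage 2: the paper simply asserts that firm 2 acquires if and only if it is a high match ``irrespective of its beliefs,'' whereas you correctly observe that the resale option gives a low-match firm 2 a belief-dependent reason to acquire (buy at $\pi_E$, flip the technology to a high-match firm 1), governed by your cutoff $\mu^\ast=2(\pi_E-\bar{\pi}_F^L)/\Delta$. Your case split on $\mu^\ast\gtrless 1$ and the endpoint check showing that a high-match firm 1 still strictly prefers to acquire under the most favorable non-acquire continuation are both sound; this is more careful than what the paper provides.

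One soft spot remains in your uniqueness argument for the low-match firm 1, and it is worth flagging because it is precisely the point the paper's blanket assertion papers over. You bound its non-acquire payoff from \emph{above} by $\Pi_F-\lambda\underline{\pi}_F^H$. That establishes that for $\lambda\geq\lambda_A(\tau)$ acquiring is optimal in every PBE, but the ``do nothing if $\lambda<\lambda_A(\tau)$'' direction needs the non-acquire payoff to equal $\Pi_F-\lambda\underline{\pi}_F^H$, i.e., you must rule out PBE in which off-path beliefs with $\mu\geq\mu^\ast$ induce a low-match firm 2 to acquire, since that would push firm 1's non-acquire payoff down to $\lambda(\Pi_F-\underline{\pi}_F^H)+(1-\lambda)(\Pi_F-\underline{\pi}_F^L)$ and, when $\underline{\pi}_F^L>0$, would sustain acquisition by a low-match firm 1 for some $\lambda$ strictly below $\lambda_A(\tau)$. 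Either restrict attention to on-path-consistent continuation beliefs (as the paper implicitly does), or add the observation that when stage 2 is reached on path the posterior is $\mu=0$; as written, the ``only if'' half of your cutoff characterization is not fully closed.
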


One can verify that for $\tau=0$ the above condition  reduces to the condition \eqref{eq:lambda}  in the baseline model. As $\tau$ increases, the threshold $\lambda_A(\tau)$ decreases, i.e., the acquisition happens for a larger set of parameters. Hence, a low-match acquirer indeed has an incentive to sell the technology to a high-match competitor because the price compensates her for any decrease in profit due to facing a more efficient competitor. Thus firms do not have an incentive to hoard technology, while the incentive to hoard talent remains. Interestingly, talent hoarding now occurs for a strictly larger set of parameters, as the option to resell technology effectively subsidizes talent hoarding.

\subsection{Partial Acquisitions}\label{sec:investments}

This section extends the baseline model to incorporate the possibility of partial investments. That is, a firm may acquire a stake (possibly a minority stake)  in the startup without integrating it. To do so, we need to specify the payoffs resulting from such partial acquisitions as well as the rights that come with partial ownership. 

Formally, the ability to make partial acquisitions means that firms can, in addition to full acquihires, try to acquire a share $s\in (0,1]$ of the startup and continue to operate it as a stand-alone entity. In contrast to the baseline model, we allow for upfront and deferred payments $(p, d)$ as is standard in such transactions.\footnote{In the baseline model, allowing for for upfront and deferred payments would not change anything.} This allows the acquiring firm to distinguish at least partially between the investor (who only gets a part of the upfront payment) and the entrepreneur (who can also receive deferred payments). If a firm acquires a share $s$ with bid $(p,d)$, its payoff reads $\Pi_F + s\pi_E(s) - p -d$, while the entrepreneur's payoff is $(1-s)\pi_E(s) + w(s) + p + d$. Here, $\pi_E(s)$ captures the startup's profit net of potential wages paid to the entrepreneur as a function of the size of the external ownership. These profits accrue to the firm and the entrepreneur proportionate to their stake in the startup. Correspondingly, $w(s)$ constitutes the entrepreneur's wage (net of effort costs) for different degrees of outside ownership. In particular, $\pi_E = \pi_E(0) + w(0)$. That is, when the entrepreneur owns the entire startup and thus obtains all of its profit and ``pays herself'' a net-of-effort wage, her payoff coincides with the initial payoff in the baseline model. The other firm's payoff is unchanged at $\Pi_F$. When a firm acquires a stake in the startup it receives a share of profits, while the dilution of ownership gives rise to moral hazard on the entrepreneur's side, which we capture in reduced form, only imposing the following assumption.

\begin{assumption} \label{ass:A5} We assume that
$\pi_E(s)+ w(s)$ is decreasing in $s$ with $\pi_E(0)+ w(0)> \pi_E(1) + w(1)$.
\end{assumption}

Assumption \ref{ass:A5} captures the moral hazard arising when the entrepreneur no longer fully owns the startup.  $\pi_E(s) + w(s)$ being decreasing reflects the reduced effort of the entrepreneur as a result of the agency problem. $\pi_E(0)+ w(0)> \pi_E(1) + w(1)$ captures that the value of a startup fully owned by the entrepreneur is strictly higher than the value when the entrepreneur is actually an employee and has no stake in the startup. 

Finally, considering control rights, we are primarily interested in the investor's ability to block, and the entrepreneur's ability to push through, an acquihire by the investor's competitor. We assume that the entrepreneur can always block an acquihire, as she could sell her shares but refuse to work for the acquiring firm. If the entrepreneur would like to be acquihired and the investor does not want the acquihire to go through, the entrepreneur can (in the spirit of typical shareholder agreements) try to ``drag along'' the investor and force the transaction. We assume that the investor has a probabilistic chance of blocking this attempt and merely impose that the probability of successfully blocking an acquihire is increasing in the investor's stake in the startup.

The timing of the game is as follows. In the first stage, firm 1 has the opportunity to make a bid to the entrepreneur to acquire a share $s_1\in (0,1]$ of the startup or make an acquihire. If the entrepreneur accepts a bid for an acquihire, the game ends. Otherwise, we move to stage 2, the ownership structure of the startup depending on whether the entrepreneur accepted firm 1's bid. In the second stage, firm 2 can make an acquihire by making a (per-share) bid to the owner(s) of the startup, where we restrict attention to bids $(p,d)$ with $p\geq \pi_E(s_1)$, so that no owner can be expropriated. If the entrepreneur does not accept the bid, the game ends. If the entrepreneur accepts, firm 1 can either also accept or try to block the transaction and succeeds in doing so with probability $\beta(s_1)$, where $\beta$ is a weakly increasing function with $\beta(0)=0$ and $\beta(1)=1$. Nature determines whether a potential blocking attempt succeeds and the game ends. 

 As a benchmark, we first consider the case where ownership of a stake does not convey any blocking rights. 

\begin{proposition} \label{prop:woblocking}
	Under Assumptions \ref{ass:A1} and \ref{ass:A5} and without blocking rights, i.e., $q(\beta)=0$ for all $\beta\in [0,1]$, firm 1's behavior in any PBE is uniquely specified. Namely, if firm 1 draws a high match type, it will make an acquihire; if it draws a low type it will make an acquihire if and only if $\lambda \geq \lambda_A$ and do nothing otherwise.
\end{proposition}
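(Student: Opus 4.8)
The plan is to solve the two-stage game by backward induction and to show that, absent blocking rights, acquiring a partial stake is strategically worthless, so that firm 1's problem collapses to the baseline choice of Proposition~\ref{prop:AHonly}.

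First I would fix an arbitrary stage-1 stake $s_1\in(0,1]$ held by firm 1 and solve stage 2. The decisive observation is that with zero blocking probability firm 1 can never prevent firm 2 from acquihiring, so firm 1's stake is payoff-relevant for firm 2 only through the bid floor $p\geq\pi_E(s_1)$ and through the fact that the standalone startup is now worth only $\pi_E(s_1)+w(s_1)\leq\pi_E$ by Assumption~\ref{ass:A5}. As in the proof of Proposition~\ref{prop:AHonly}, a high-type firm 2 always acquihires: it gains $\bar{\Pi}_F^H$ and, since the owners' total reservation value is at most $\pi_E$, Assumption~\ref{ass:A1}(i) gives $\bar{\Pi}_F^H-\pi_E>\Pi_F$. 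The low type is the delicate case, because dilution lowers the owners' reservation value and could in principle make a low-match buyout attractive; I would handle this by comparing firm 1's overall payoff across the two possible low-type continuations rather than by pinning down firm 2's action in isolation.

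Second I would rule out partial acquisitions in stage 1 by a surplus-accounting argument. Decompose firm 1's payoff into its competitive-market component (which depends only on firm 2's eventual acquihire decision and equals $\Pi_F$ or $\underline{\Pi}_F^{\theta_2}$) plus the net financial return on its stake. Since blocking is impossible, a stake buys firm 1 a fraction of a pie that, by Assumption~\ref{ass:A5}, is strictly smaller than the undiluted value $\pi_E$, while the entrepreneur's participation constraint forces firm 1 to pay at least fair value for that fraction. I would then show that any change in firm 2's behavior induced by the lower reservation value can only hurt firm 1 on net: a low-type buyout moves firm 1's competitive payoff from $\Pi_F$ down to $\underline{\Pi}_F^L$ (Assumption~\ref{ass:A1}(ii)) and the accompanying sale proceeds cannot compensate once the moral-hazard loss of Assumption~\ref{ass:A5} is accounted for. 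Hence firm 1's payoff from any $s_1\in(0,1)$ is strictly below its payoff from doing nothing, so no partial acquisition is ever optimal and firm 1's effective choice set reduces to making a full acquihire or abstaining.

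Once partial acquisitions are eliminated, firm 1 faces exactly the baseline decision: by Assumption~\ref{ass:A1} a high type acquihires, and a low type acquihires if and only if $\bar{\Pi}_F^L-\pi_E\geq\lambda\underline{\Pi}_F^H+(1-\lambda)\Pi_F$, i.e.\ $\lambda\geq\lambda_A$, which also yields uniqueness of firm 1's equilibrium behavior. I expect the main obstacle to be precisely the low-type stage-2 comparison flagged above: one must verify that the richer instrument $(p,d)$, the per-share bid floor $p\geq\pi_E(s_1)$, and the split between the investor's and the entrepreneur's shares cannot be combined to manufacture a strategic benefit when the blocking probability is zero. The resolution rests on showing that both firm 1's and the entrepreneur's continuation values are weakly lower under any dilution than under no investment, so that Assumption~\ref{ass:A5}'s strict value loss can never be offset.
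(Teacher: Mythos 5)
Your argument is correct and follows essentially the same route as the paper: backward induction, characterizing firm 2's best response to each stake $s_1$ (including the possibility that dilution tempts a low-match firm 2 to buy out), and then showing that with zero blocking probability any investment yields firm 1 a weakly worse competitive payoff plus the non-positive financial return $\pi_E(s_1)+w(s_1)-\pi_E(0)-w(0)$, so the choice collapses to the baseline acquihire-or-nothing decision of Proposition \ref{prop:AHonly}. The only difference is presentational: the paper obtains this result as the $\beta\equiv 0$ specialization of its general proof of Proposition \ref{prop:wblocking}, whereas you argue the special case directly, reaching the same conclusion.
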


The result in Proposition \ref{prop:woblocking} shows that partial ownership of the startup is not enough to change firm 1's behavior relative to the setting with only acquihires in Proposition \ref{prop:AHonly}. Indeed, since an investment is not profitable in itself and does not prevent a high-match firm 2 from making an acquihire, firm 1 will continue to either do nothing or make an acquihire, depending on the probability of a high-match firm 2 materializing. As the next result shows, it is ownership accompanied by some measure of control over the startup, which makes investments attractive to firm 1.

\begin{proposition} \label{prop:wblocking}
	Under Assumptions \ref{ass:A1} and \ref{ass:A5} and with blocking rights, firm 1's behavior in any PBE is uniquely specified. Namely, if firm 1 draws a high type, it will make an acquihire. If it draws a low type, there is a threshold value for $\lambda$ below which it does nothing. Above the threshold, it will do an investment and, depending on the model's parameters, there may be an even higher threshold above which it does an acquihire. 
\end{proposition}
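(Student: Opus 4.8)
The plan is to solve the two-stage game backward, treating a high- and a low-match firm~1 separately, and to reduce firm~1's problem to a comparison of three payoff functions of $\lambda$.

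First I would dispose of the high-match firm~1. Here an acquihire strictly dominates both an investment and doing nothing: only full integration realizes the large own efficiency gain, and by Assumption~\ref{ass:A1}(i) the integrated joint surplus $\bar{\Pi}_F^H$ strictly exceeds the standalone joint surplus $\Pi_F+\pi_E$, which firm~1 — as the party making the offer — can profitably capture. Since an acquihire also removes the startup from firm~2's reach entirely (weakly better than the merely probabilistic blocking an investment provides) and an investment additionally destroys value through the moral hazard of Assumption~\ref{ass:A5}, the acquihire is the unique optimum, establishing the first claim.

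The core of the argument concerns the low-match firm~1, for which I would first solve the stage-2 continuation as a function of the stake $s_1$ carried over from stage~1. The key sub-claims are: (i) a low-match firm~2 never acquihires, since by Assumption~\ref{ass:A1}(i) a low-match acquihire destroys joint surplus and the incumbent owners would have to be compensated; and (ii) a high-match firm~2 attempts an acquihire, against which a low-match firm~1 holding a stake optimally tries to block — because $\underline{\Pi}_F^H<\Pi_F$ (Assumption~\ref{ass:A1}(ii)) makes facing an integrated high-match rival strictly worse than keeping the startup standalone — succeeding with probability $\beta(s_1)$. Using the entrepreneur's participation constraint to pin down the stage-1 payment (her threat point being full ownership carried into stage~2, which itself depends on $\lambda$) and optimizing over $s_1$ (trading blocking power $\beta$ against the moral-hazard wedge), I would obtain the expected investment payoff $V_I(\lambda)$, alongside the payoff from doing nothing, $V_0(\lambda)=\lambda\underline{\Pi}_F^H+(1-\lambda)\Pi_F$, and the constant acquihire payoff $V_A=\bar{\Pi}_F^L-\pi_E$. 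I would then read off the thresholds from these three functions: all are affine in $\lambda$, with $V_0$ declining at the steepest slope $\underline{\Pi}_F^H-\Pi_F$, $V_A$ flat, and $V_I$ declining at an intermediate slope because blocking insures firm~1 only partially. Since $V_0(0)=\Pi_F$ exceeds both $V_I(0)$ (strictly, by the moral-hazard wedge of Assumption~\ref{ass:A5}) and $V_A$ (by Assumption~\ref{ass:A1}(i)), doing nothing is optimal for small $\lambda$; the flatter $V_I$ overtakes $V_0$ at a lower threshold, and whether the flat $V_A$ in turn overtakes the still-declining $V_I$ before $\lambda=1$ — yielding the upper, acquihire threshold — depends on how unprofitable a low-match acquihire is, i.e.\ on the parameters. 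Single-crossing of affine functions then delivers uniqueness of the induced action in any PBE.

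The main obstacle I anticipate is twofold. Solving the stage-2 subgame cleanly requires handling the per-share bidding (with the constraint $p\ge\pi_E(s_1)$) and the probabilistic block simultaneously with the entrepreneur's participation constraint, whose threat point moves with $\lambda$; this is precisely where $V_I(\lambda)$ is constructed. The more delicate qualitative point is to show that \emph{investment}, rather than the full acquihire, is the option that first displaces ``doing nothing'' — that is, that the lower threshold is the investment threshold and lies strictly below $\lambda_A$ from \eqref{eq:lambda}. The natural route is to verify $V_I(\lambda_A)>V_A$: at the baseline indifference point between nothing and acquihire, the strictly cheaper partial-protection option must pay strictly more, so that investment occupies a genuine intermediate interval rather than being skipped.
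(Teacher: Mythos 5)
Your backward-induction skeleton and the final picture (three payoff curves in $\lambda$, with ``nothing'' best near $0$, investment in an intermediate range, and possibly an acquihire for large $\lambda$) match the paper's conclusion, but your stage-2 analysis is too coarse in a way that matters. Your sub-claim (i) --- that a low-match firm 2 never attempts an acquihire because it destroys joint surplus --- is not correct once firm 1 holds a stake $s_1>0$. The moral hazard of Assumption \ref{ass:A5} degrades the startup's standalone value to $\pi_E(s_1)+w(s_1)<\pi_E(0)+w(0)=\pi_E$, so the comparison relevant for a low-match firm 2 is $\bar{\Pi}_F^L \gtrless \Pi_F+\pi_E(s_1)+w(s_1)$, which can flip for $s_1$ above a threshold $\hat{s}$ even though Assumption \ref{ass:A1}(i) rules it out at $s_1=0$. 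Likewise, your sub-claim (ii) considers only the ``bid that the entrepreneur alone accepts, firm 1 tries to block'' scenario; the paper shows firm 2 (of either type) may instead prefer to make a bid large enough that \emph{firm 1 also} accepts, avoiding the blocking lottery entirely. The paper's proof is organized around exactly these regimes: three thresholds $\hat{s}$, $s^L$, $s^H$ in the stake $s_1$ partition firm 2's best response into action profiles such as $(N,E)$, $(N,B)$, $(E,B)$, $(B,B)$, and firm 1's stage-1 problem is to choose which profile to induce. Your analysis collapses all of this into a single ``block with probability $\beta(s_1)$'' continuation, which both misstates firm 1's investment payoff and omits the option that the paper finds can be optimal for large $\lambda$ in one of its cases (a large investment inducing $(N,B)$, i.e.\ being bought out, rather than a full acquihire).

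A second, smaller gap: you assert that the investment payoff $V_I(\lambda)$ is affine. After optimizing over $s_1$ (and over which firm-2 regime to induce), $V_I$ is an upper envelope of affine functions with $\lambda$-slopes ranging over $-(1-\beta(s_1))(\Pi_F-\underline{\Pi}_F^H)$, hence convex rather than affine; the single-crossing argument can likely be repaired using the slope bounds, but as written the threshold structure is asserted rather than derived. Finally, your proposed verification that the investment threshold lies strictly below $\lambda_A$ (via $V_I(\lambda_A)>V_A$) is a sensible plan but is left unexecuted, and the paper itself establishes the threshold ordering case by case from the explicit payoff list rather than by such a comparison.
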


Thus, we find that an acquihire can be more profitable than buying a startup and letting it operate independently, implying that talent hoarding persists in this extension. Further, with sufficiently strong blocking rights, an investment may constitute a viable and cheaper alternative to an acquihire. Therefore, as investments are cheaper than acquihires, the possibility of partial ownership may reduce the frequency of acquihires while increasing the frequency with which some transaction takes place. Notably, the reduction in overall profits is lower in the case of investments than in the case of low-match acquihires. Therefore, the possibility of investments is more likely to lead to inefficient market outcomes, albeit at a lower degree of inefficiency. Put differently, allowing for investments increases the extensive margin of talent hoarding but partially decreases its intensive margin.

\section{Conclusion}\label{sec:conclusion}

We have presented a simple, yet general, reduced-form model of startup acquisitions. We showed that acquihires may not only reflect firms' desire to hire talented employees but also be rooted in an incentive to engage in inefficient talent hoarding, thus potentially warranting regulators' attention. Further, we showed that acquihires may decrease consumer surplus and increase job volatility of acquihired employees, thereby giving further reasons for regulatory scrutiny. 


Our paper is complimentary to the literature on the acquisition of potential competitors since it shows that welfare-reducing acquisitions can occur even when the acquirer and the target are not active (even potentially) in the same market. Moreover, as we show in Section \ref{sec:technology}, hoarding of talent can be more pernicious than preemptive acquisition of technology or physical assets. This is because, in the absence of property rights, the firms cannot negotiate over the reallocation of talent. 

Still, it is not clear from our analysis that regulators always ought to intervene when incumbents acquihire startups. As the discussion surrounding Proposition \ref{prop:CS} makes clear, even within our baseline model, restrictive policy can decrease consumer welfare by blocking efficient acquisitions. Moreover, Proposition \ref{prop:CS}(iii) clarifies that improving consumer welfare can require precise information on the likelihood that the match with an incumbent is high or low, which regulators are not likely to posses. Finally, while our results in Section \ref{sec:robustness} on the order of moves and surplus sharing show that talent hoarding persists in these settings, they also show that its incidence decreases, further limiting scenarios for which regulatory intervention is desirable. In our view, the balance of benefits and harms of stricter regulatory scrutiny of startup acquisitions remains an open question. 

We close by noting that our model gives rise to several hypotheses that could be tested empirically. First, our model predicts a positive relationship between talent hoarding and job volatility of acquihired employees. Second, an acquihire by a dominant firm is more likely to be motivated by talent hoarding. Third, increasing market competitiveness can curb talent hoarding but not always monotonically. Fourth, the strength of blocking rights implied by shareholder agreements should have an impact on the relative frequency of acquihires and investments.

\newpage

\appendix

\section{Proofs} \label{app:proof}

\subsection{Proof of Proposition \ref{prop:labor}}

\setcounter{equation}{0}
\numberwithin{equation}{section}

The result is reached in three steps: solving the benchmark game without talent hoarding, with talent hoarding, and then comparing both.

\paragraph{Benchmark.} Absent incentives to hoard talent, only high-match firms do an acquihire. Thus, a layoff only takes place if the economy enters a downturn and the period-1 acquirer is hit by an adverse shock. Further, following a layoff, we observe the entrepreneur not being hired by the competitor only if the competitor was a low-match firm or if it was a high-match firm that got hit by an adverse shock. Finally, hiring takes place in period 1 unless both firms are low-match. Taken together, the probability of observing a layoff in period 2 is
\begin{align}
	l^*=\delta [\lambda+(1-\lambda)\lambda]\gamma=\delta(2\lambda-\lambda^2) \gamma
\end{align}
and the probability of observing the entrepreneur exiting the industry in period 2 is 
\begin{align}
	u^*&=\delta(2\lambda-\lambda^2-\lambda^2(1-r)(1-\gamma))\gamma.
\end{align} 

\paragraph{Talent hoarding.}

First, note that period-2 incentives coincide with those in period 1 absent an economic downturn, ruling out any layoffs. Following a downturn, three cases arise in period 2:
\begin{itemize}
	\item Suppose firm 1 acquihired in period 1. Then, firm 2 does an acquihire in period 2 iff it has a high match. In period 2 a high-match firm 1 does an acquihire. A low-match firm 1 that received a $D$ shock, believes its competitor has a high match with probability $\lambda (1-r)(1-\gamma)$ and will do an acquihire if this is larger than $\lambda_A$.  Analogously, a low-match firm 1 receiving a $N$ shock will do an acquihire if $\lambda (1-\gamma(1-r)) \geq \lambda_A$. 
	\item Suppose firm 2 acquihired in period 1, so firm 1 is a low-match and will not do an acquihire moving second. Thus, firm 2 does an acquihire iff it has a high match.
	\item Suppose no acquihire took place in period 1. Then, both firms have a low match and this is commonly known, leading to no acquihires in period 2 either.
\end{itemize}

Moving to period 1, firm 2 does an acquihire iff it has a high match, knowing that firm 1 has a low match, since a high-match firm 1 would always do an acquihire. For a low-match firm 1, doing nothing yields
\begin{align*}
	\lambda\left(\underline{\Pi}_F^H(2-\gamma\delta) + \gamma\delta \Pi_F\right)+ (1-\lambda) 2 \Pi_F = \lambda(2-\gamma\delta)(\underline{\Pi}_F^H-\Pi_F)+ 2\Pi_F.
\end{align*}
The payoff of an acquihire depends on parameters and reads:
{\small{
\begin{itemize}
    \item $2( \bar{\Pi}_F^L-\pi_E)$ if $\lambda (1-r)(1-\gamma) >\lambda_A$;
    \item $(\bar{\Pi}_F^L-\pi_E)(2-\delta\gamma) - \delta\gamma\lambda (1-r)(1-\gamma)(\Pi_F -\underline{\Pi}_F^H) + \delta \gamma\Pi_F$ if $\lambda (1-\gamma(1-r))>\lambda_A > \lambda (1-r)(1-\gamma)$;
    \item $(\bar{\Pi}_F^L-\pi_E)(2-\delta) - \delta\lambda(1-\gamma)(\Pi_F -\underline{\Pi}_F^H) + \delta \Pi_F$ if $\lambda_A > \lambda (1-\gamma(1-r))$.
\end{itemize}
}}
\noindent In case 1, a low-match firm will always hoard talent in period 2. In case 2, it will hoard talent unless it receives a $D$ shock in an economic downturn. In case 3, it will hoard talent as long as the economy does not experience a downturn. Thus, comparing the total payoffs from doing nothing or an acquihire in period 1, the acquihire thresholds for period 1 read, respectively, 
\begin{align*}
	\lambda_A^1 &= \lambda_A\cdot \frac{2}{2-\gamma\delta},\\
	\lambda_A^2 &= \lambda_A\cdot \frac{2-\delta \gamma}{2- \delta \gamma - (1-r)(1-\gamma)\delta \gamma},\\
	\lambda_A^3 &= \lambda_A,
\end{align*}
so that $\lambda_A^1 \geq \lambda_A^2 \geq \lambda_A^3$.

\paragraph{Comparison.}
To compare hiring and layoffs, we need to consider three cases.

	\textit{Case 1:}  $\lambda (1-r)(1-\gamma) >\lambda_A$. Then, $\lambda \geq \lambda_A^1 = \lambda_A\frac{2}{2-\gamma\delta}$. Hence, a low-match firm will do an acquihire in both periods so no layoffs are observed, which is less than in the benchmark. Thus, irrespective of whether the economy hits a downturn, the entrepreneur is always employed without separation.
 
	\textit{Case 2:} $\lambda (1-\gamma(1-r))>\lambda_A >\lambda (1-r)(1-\gamma)$. Hence, $\lambda>\lambda_A^2$ so firm 1 will always do an acquihire in period 1. In period 2, firm 1 will maintain employment of the entrepreneur unless it receives shock $D$. Hence, the probability of a layoff will be $\delta\gamma$, which is larger than the benchmark layoff rate $l^*$. 

\textit{Case 3:} $\lambda_A > \lambda (1-\gamma(1-r))$. If $\lambda>\lambda_A^3 $ then firm 1 will do an acquihire in period 1. Firm 1 will maintain employment in period 2 unless it was hit by a downturn and has a low match. Hence, the probability of observing a layoff is $\delta(\gamma+(1-\lambda)(1-\gamma))=\delta(1-\lambda(1-\gamma))$, which is larger than $l^*$. 
If instead $\lambda<\lambda_A^3$, then we have no talent hoarding in either stage and the equilibrium is identical to the benchmark.

Finally, observe that the condition $\min\left\{\frac{\lambda_A}{\lambda}, \frac{1-\lambda}{\lambda}\right\} > (1-r)(1-\gamma)$ implies that we are either in Case 2 or 3. 

\subsection{Proof of Proposition \ref{prop:people_tech}}

Suppose firm 1 has not made the acquisition. It follows from Assumption \ref{ass:A3} that firm 2 makes the acquisition if and only if it has a high match type, irrespective of its beliefs. Moving to stage 1, firm 1's beliefs are given by the prior belief. Suppose a high-match firm 1 acquired the startup. By Assumption \ref{ass:A4} a low-match firm 2 would not buy the technology part of the startup and by our tie-breaking assumption neither would a high-match firm 2. Then, by Assumption \ref{ass:A3} a high-match firm acquires the startup and keeps the technology. Suppose a low-match firm 1 acquired the startup. By our tie-breaking assumption, a low-match firm 2 would not buy the startup's technology. However, by Assumption \ref{ass:A4}, the technology part would be sold to a high-match firm 2. Anticipating this, the threshold for a low-match firm 1 to acquire the startup changes relative to the model in the main text. Formally, doing nothing yields
	\begin{align*}
		\Pi_F - \lambda \underline{\pi}_F^H,
	\end{align*}
	while making an acquisition yields
	\begin{align*}
		\Pi_F -\pi_E  +(1-\lambda)\bar{\pi}_F^L + \lambda(q + (1-\tau)\bar{\pi}_F^L - \tau \underline{\pi}_F^H),
	\end{align*}
	where
\begin{align*}
	q = \frac{\tau\left(\bar{\pi}_F^{H}+ \underline{\pi}_F^{L}+\bar{\pi}_F^{L}+\underline{\pi}_F^{H}\right)}{2},
\end{align*}
is the surplus-splitting sale price. Thus, an acquisition takes place whenever
\begin{equation*}
	\Pi_F -\pi_E  +(1-\lambda)\bar{\pi}_F^L + \lambda\left(\frac{\tau\left(\bar{\pi}_F^{H}+ \underline{\pi}_F^{L}+\bar{\pi}_F^{L}+\underline{\pi}_F^{H}\right)}{2} + (1-\tau)\bar{\pi}_F^L - \tau \underline{\pi}_F^H\right)\geq \Pi_F - \lambda \underline{\pi}_F^H,
\end{equation*}
which we can rearrange to the expression in the Proposition.

\subsection{Proofs of Propositions \ref{prop:woblocking} and \ref{prop:wblocking} } 

In what follows, we prove Proposition \ref{prop:wblocking}, which is a generalization of Proposition \ref{prop:woblocking}. We solve the game backward.
	
	\subparagraph{Stage 2}
	Observe that firm 2 believes with probability 1 that firm 1 is a low type, as otherwise, an acquihire would have taken place in stage 1. Firm 2 has three actions: doing nothing, making a bid that is accepted by both the entrepreneur and firm 1 and making a bid which is accepted only by the entrepreneur. To understand this, observe that the payoffs of firm 1 and the entrepreneur following a firm-1 investment of size $s_1$ at price $(p_1, d_1)$ read
	
	\begin{align*}
		\text{Firm 1}:~ & \Pi_F + s_1 \pi_E(s_1) - p_1-d_1\\
		\text{Entrepreneur}:~ & (1-s_1)\pi_E(s_1) + w(s_1) + p_1 + d_1.
	\end{align*}
	Hence,  firm 1 would try to block any bid $(p_2,d_2)$ resulting in a lower payoff than the above, while the entrepreneur would not accept any bid yielding a lower payoff than the above. Given these constraints, firm 2 will choose among three options: (i) do nothing and receive payoff $\Pi_F$; (ii) make the minimum bid such that both firm 1 and the entrepreneur accept, yielding payoff $\bar{\Pi}_F^\theta- \pi_E(s_1) - w(s_1)-\Pi_F + \underline{\Pi}_F^\theta$ for firm 2; or (iii) make a bid that only the entrepreneur accepts, risking a blocking attempt by firm 1. This third option provides an expected payoff for firm 2 of $\beta(s_1) \Pi_F + (1-\beta(s_1))(\bar{\Pi}_F^\theta- \pi_E(s_1)-w(s_1))$. 
	
	Let us consider a low-match firm 2 first. Comparing doing nothing with inducing only the entrepreneur to accept, we obtain that the latter move is is better for firm 2 whenever
	\begin{align*}
		\bar{\Pi}_F^L- \pi_E(s_1) -w(s_1) \geq \Pi_F.
	\end{align*}
	It follows from Assumptions \ref{ass:A1} and \ref{ass:A5} that this condition is not necessarily satisfied. Let $\hat{s}$ be the threshold above which this condition is satisfied.  Now let's compare doing nothing with inducing both for $s_1<\hat{s}$. We obtain that inducing both is better whenever
	\begin{align*}
		\bar{\Pi}_F^L- \pi_E(s_1) - w(s_1)-\Pi_F + \underline{\Pi}_F^L \geq \beta(s_1) \Pi_F + (1-\beta(s_1))(\bar{\Pi}_F^L- \pi_E(s_1)-w(s_1)),
	\end{align*}
	which, after rearrangement, can be rewritten as
	\begin{align}
	\label{partial-cutoff-1}
		\beta(s_1)(\bar{\Pi}_F^L-\Pi_F- \pi_E(s_1)-w(s_1)) \geq \Pi_F - \underline{\Pi}_F^L.
	\end{align}
	Since the right-hand side of \eqref{partial-cutoff-1} is positive and the left-hand side of it is negative for $s_1<\hat{s}$, this condition is never satisfied. Thus, for $s_1<\hat{s}$ the low type does nothing.

	Now let us compare firm 2 inducing only the entrepreneur versus inducing both to accept when $s_1\geq \hat{s}$. As above, inducing both is better whenever condition \eqref{partial-cutoff-1} holds. Since  the left-hand side of \eqref{partial-cutoff-1} is increasing in $s_1$, we define $s^L\in [\hat{s},1]$ as the threshold above which the condition is satisfied. Overall, a low-match firm 2 will take the following actions: for $s_1<\hat{s}$, do nothing; for $\hat{s}\leq s_1\leq s_L$, induces only the entrepreneur to accept; and for $s_1>s_L$, induce both the entrepreneur and firm $1$ to accept the offer.

	Next, we turn to the high-match firm 2. Comparing payoffs of doing nothing and inducing only the entrepreneur, it follows from Assumption 1 that the firm will always find the former option inferior. So we only need to compare the payoffs of inducing both versus only the entrepreneur. In particular, inducing both is optimal whenever 
	\begin{align*}
		\beta(s_1)(\bar{\Pi}_F^H-\Pi_F- \pi_E(s_1)-w(s_1)) \geq \Pi_F - \underline{\Pi}_F^H,
	\end{align*}
	which may be satisfied for $s_1$ above some threshold $s^H \in [0,1]$. Below this threshold, the high type will induce only the entrepreneur. Note that it is not clear whether $s^H$ or $s^L$ are bigger. 
	
	\subparagraph{Stage 1} 
	
	Observe that firm 1's belief about firm 2's type is given by the prior. Firm 1 can acquire different stakes $s_1$ which in turn may induce different responses from firm 2.  Specially, we have learned that a low-match firm 2 may do either nothing (N), induce only the entrepreneur (E), or induce both the entrepreneur and firm 1 to accept a bid (B).  As for a high-match firm 2, it may either do E or B. In what follows let $(A_1,A_2)$ denote the action profiles of the low- and high-match firm 2, e.g, $(N,B)$ means firm 2 does nothing when it is a low type, and it induces both to accept when its type is high. Let $\Delta (s_1)\equiv \pi_E(s_1)+w(s_1)-\pi_E(0)-w(0)$. The following are the firm-1 payoffs resulting from an acquisition of $s_1$ which induces the indicated firm-2 behavior:
	\begin{align*}
		(N,E) :&\; \lambda (1-(\beta(s_1))) \underline{\Pi}_F^H + (1-\lambda(1-\beta(s_1))) \Pi_F + \Delta (s_1)\\
		(N,B) :&\; \Pi_F + \Delta (s_1)\\
		(E,B) :&\; (1-(1-\lambda)(1-\beta(s_1))) \Pi_F+ (1-\lambda)(1-\beta(s_1)) \underline{\Pi}_F^L + \Delta (s_1)\\
		(B,B) :&\;  \Pi_F + \Delta (s_1)\\
		(E,E) :&\;\beta(s_1) \Pi_F+ (1-\beta(s_1)) (\lambda \underline{\Pi}_F^H  + (1-\lambda)\underline{\Pi}_F^L) + \Delta (s_1)\\
		(B,E) :&\;\lambda (1-(\beta(s_1))) \underline{\Pi}_F^H + (1-\lambda(1-\beta(s_1))) \Pi_F +\Delta (s_1)
	\end{align*}
	
	To illustrate how to calculate these payoffs, consider the case $(N,E)$, where the low-match firm 2 does nothing and the high-match induces only the entrepreneur to accept (while firm 1 would try to block such an acquihire attempt by firm 2). Observe that the lowest bid at which the entrepreneur is willing to sell a stake $s_1$ to firm 1 is $p_1 + d_1 = \pi_E(0) + w(0) - (1-s_1)\pi_E(s_1) - w(s_1)$. As we are considering the case $(N,E)$, so that a low-match firm 2 would do nothing, yielding a payoff of
	\begin{align}
	\label{partial-example-1}
		\Pi_F + s_1\pi_E(s_1) - (\pi_E(0) + w(0) - (1-s_1)\pi_E(s_1) - w(s_1))= \Pi_F + \Delta(s_1).
	\end{align}
	A high-match firm 2 would make a bid that firm 1 will try to block, succeeding with probability $\beta(s_1)$, yielding the following payoff to firm $1$
	\begin{align}
	\label{partial-example-2}
		\beta(s_1) \left(\Pi_F + \Delta(s_1)\right)+(1-\beta(s_1)) \left(\underline{\Pi}_F^H + \Delta(s_1)\right).
	\end{align}
	Adding \eqref{partial-example-1} and \eqref{partial-example-2} up while multiplying them with the probabilities $1-\lambda$ and $\lambda$, respectively, we obtain the expression in the above list.\footnote{Observe that the size of firm 1's investment $s_1$ is not the same across cases in the above list, as different investment sizes induce the different behaviors of firm 2.}  Finally, to complete the list, note that an acquihire gives a payoff $\bar{\Pi}_F^L - \pi_E(0)-w(0)$ to firm $1$ and doing nothing results in $\lambda \underline{\Pi}_F^H + (1-\lambda) \Pi_F$.
	
	To determine firm 1's equilibrium strategy, we need to further distinguish between three cases based on the values of the thresholds $s^H$, $s^L$, and $\hat{s}$. Recall that a low-match firm 2 will do nothing below $\hat{s}$, induce the entrepreneur between $\hat{s}$ and $s^L$ and potentially induce both above $s^L$, while a high-match firm 2 will induce only the entrepreneur below $s^H$ and may induce both above it. 
	
	\textit{Case 1:} $s^H \leq \hat{s} \leq s^L$. Note that this allows only for four types of firm-2 behavior following an investment. If $s_1\leq s^H$, then we have $(N,E)$, a low-match firm 2 does nothing and a high-match induces only the entrepreneur ; if $s^H <s_1 \leq \hat{s}$, then we have $(N,B)$, namely a low-match firm 2 does nothing and a high-match induces both; if $\hat{s}< s_1\leq s^L$, then a low-match firm 2 induces the entrepreneur and a high-match induces both, so we have $(E,B)$; and if $s_1 > s^L$, both types of firm 2 induce both, so we end up with $(B,B)$. Comparing these payoffs, we observe that an acquihire dominates investments inducing $(B,B)$ and $(E,B)$, while an investment inducing $(N,B)$ dominates an acquihire. Thus, the remaining actions are doing nothing, inducing $(N,E)$ or inducing $(N,B)$.
	\begin{align*}
		(N,E) :&\; \lambda (1-(\beta(s_1))) \underline{\Pi}_F^H + (1-\lambda(1-\beta(s_1))) \Pi_F + \Delta(s_1)\\
		(N,B) :&\; \Pi_F +\Delta(s_1)\\
		(N): & \; \lambda \underline{\Pi}_F^H + (1-\lambda) \Pi_F,
	\end{align*}
	where the investment necessary to induce $(N,B)$ is bigger than the one for $(N,E)$ but both are smaller than $\hat{s}$ so that $\Pi_F + \pi_E(s_1) + w(s_1)\geq \bar{\Pi}_F^L$. We note that: 
	\begin{itemize}
		\item As $\lambda\rightarrow 0$ doing nothing dominates both types of investment
		\item For $\lambda>\frac{\pi_E(0)+w(0)-\pi_E(s_1) - w(s_1) }{\Pi_F-\underline{\Pi}_F^H}$
		inducing $(N,B)$ dominates $(N)$
		\item Depending on parameters, $(N,E)$ or $(N,B)$ is the better investment, but for low enough $\lambda$ inducing $(N,E)$ is always better	
	\end{itemize}
	In summary, there is a threshold value for $\lambda$ below which doing nothing is best, then for larger $\lambda$ inducing $(N,E)$ is better, and for very large $\lambda$, depending on parameters, inducing $(N,B)$ may be best.

	\textit{Case 2:} $\hat{s} \leq s^H \leq  s^L$. Note that this allows only for four types of firm-2 behavior following an investment:  $(N,E)$, $(E,E)$, $(E,B)$ and $(B,B)$. Proceeding as above, we find that there is a threshold value for $\lambda$ below which doing nothing is best, then for larger $\lambda$ inducing $(N,E)$ is better, and for very large $\lambda$, depending on parameters, doing an acquihire may be best.
	
	\textit{Case 3:} $ \hat{s} \leq s^L\leq s^H $. Note that this allows only for four types of firm-2 behavior following an investment:  $(N,E)$, $(E,E)$, $(B,E)$ and $(B,B)$. Proceeding as above, we find that there is a threshold value for $\lambda$ below which doing nothing is best, then for larger $\lambda$ inducing $(N,E)$ is better, and for very large $\lambda$, depending on parameters, doing an acquihire may be best.

\section{Further Extensions}  \label{sec:app_extensions}

\setcounter{equation}{0}
\numberwithin{equation}{section}

\setcounter{proposition}{0}
\numberwithin{proposition}{section}

\renewcommand{\theproposition}{B\arabic{proposition}}

\setcounter{corollary}{0}
\numberwithin{corollary}{section}

\renewcommand{\thecorollary}{B\arabic{corollary}}

\setcounter{assumption}{0}
\numberwithin{assumption}{section}

\renewcommand{\theassumption}{B\arabic{assumption}}

\subsection{Simultaneous Moves}

First, consider the complete-information situation where the entrepreneur tries to maximize revenue from selling the startup to the two firms. Then, we obtain quite directly from Proposition 1 in \citet{JehielMoldovanuStacchetti1996} the following result for the case of a low-type and a high-type firm. 

\begin{proposition}
    The low-type firm buys the startup at price $p^L = \bar{\Pi}_F^L - \underline{\Pi}_F^H$ whenever $\bar{\Pi}_F^L+ \underline{\Pi}_F^L \geq\bar{\Pi}_F^H+ \underline{\Pi}_F^H$. Otherwise, the high-type firm buys the startup at price $p^H = \bar{\Pi}_F^H - \underline{\Pi}_F^L$.
\end{proposition}
The only thing to verify is that these prices exceed $\pi_E$. This can be seen from $p^H = \bar{\Pi}_F^H - \underline{\Pi}_F^L \geq \bar{\Pi}_F^H - \Pi_F > \pi_E$ (Assumption \ref{ass:A1}) and because in case of a (revenue-maximizing) sale to the low-type firm we must have $p^L \geq p^H$ and thus $p^L> \pi_E$. 

The application of Proposition 2 in \citet{JehielMoldovanuStacchetti1996} for the incomplete-information case is not as straightforward, as the frameworks do not quite match anymore. However, their Application 2 (p. 824) comes quite close and suggests that analagous conditions exist to determine when the low-type or the high-type firm succeeds in buying the startup.

\subsection{Uncertain Order of Moves} \label{app:unknownorder}

We consider a variation of our baseline model in which the order in which the firms move is privately drawn by nature with uniform probabilities. Hence, firm 1 is not necessarily moving first anymore. Specifically, when firm $i$ gets to interact with the startup,  it does not directly observe whether the other firm has already interacted with the startup. Still, firm $i$ can make a bid to acquihire the startup, which the entrepreneur can accept or reject. Ex-ante probabilities of the firms' private match types are unchanged and the payoffs following an acquisition, too. We obtain the following result.

\begin{proposition}
	Under Assumption \ref{ass:A1}, there exists a symmetric equilibrium in which all firms acquihire the startup at price $\pi_E$ independently of their type if 	
	\begin{align} \label{eq:unknownorder}
		\lambda \geq \lambda_{A,U}\equiv\frac{\pi_E+\underline{\Pi}^L_F-\bar{\Pi}_F^L}{\underline{\Pi}^L_F-\underline{\Pi}_F^H}.
	\end{align}
\end{proposition}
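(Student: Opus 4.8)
The plan is to verify that the candidate profile — every firm, regardless of type, bids exactly $\pi_E$ whenever it is activated with the startup still available, and the entrepreneur accepts any bid of at least $\pi_E$ — constitutes a PBE precisely when $\lambda \geq \lambda_A'$. First I would dispose of the easy pieces. Since the entrepreneur's outside option is $\pi_E$, a bid of $\pi_E$ leaves her indifferent and (by the tie-breaking convention) is accepted, any lower bid is rejected, and any higher bid is wasteful; hence the only decision a firm effectively faces is whether to acquire at price $\pi_E$ or to do nothing. A high-match firm strictly prefers to acquire irrespective of anything else: acquiring yields $\bar{\Pi}_F^H - \pi_E$, which exceeds $\Pi_F$ by Assumption \ref{ass:A1}(i) and a fortiori exceeds both $\underline{\Pi}_F^L$ and $\underline{\Pi}_F^H$ by Assumption \ref{ass:A1}(ii), so no belief about position or about the rival can overturn this. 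The entire content of the proposition therefore concerns the low-match firm.

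The crux — and the step I expect to be the main obstacle — is the inference a firm draws from the mere fact that it is called to act on an available startup, given that the order of moves is unobserved. In the candidate equilibrium the rival acquires with probability one whenever it moves first; consequently, the only way a firm can find itself activated with the startup still available is to be the first mover. Formally, the firm's information set pools the node ``I move first'' (ex-ante probability $\tfrac12$, startup always available) with the node ``I move second and the rival did not acquire'' (probability zero on path), so Bayes' rule assigns posterior probability one to being first. I would stress that the off-path node at which a low-match second mover faces an available startup lies inside this same pooled information set and is therefore played with the same action; sequential rationality is required only at the level of the information set given its (degenerate) belief, so the prescribed action is justified even though, taken in isolation, that node would call for a different choice. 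This reduces the low type's problem to an effectively-first-mover comparison.

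Finally I would carry out that comparison. Believing it is first, a low-match firm that acquires obtains $\bar{\Pi}_F^L - \pi_E$; if instead it does nothing, the rival moves second and acquires regardless of type, leaving the firm with $\lambda \underline{\Pi}_F^H + (1-\lambda)\underline{\Pi}_F^L$. Acquiring is optimal if and only if
\begin{align*}
	\bar{\Pi}_F^L - \pi_E \geq \lambda \underline{\Pi}_F^H + (1-\lambda)\underline{\Pi}_F^L,
\end{align*}
which, using $\underline{\Pi}_F^L - \underline{\Pi}_F^H > 0$ from Assumption \ref{ass:A1}(ii), rearranges to $\lambda \geq \lambda_A'$ with $\lambda_A'$ as stated. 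It then remains only to confirm that the pieces cohere as a PBE: the beliefs derived above are Bayes-consistent with the strategies, the high type's and (for $\lambda \geq \lambda_A'$) the low type's prescribed actions are sequentially rational at every information set, and no price deviation helps. The contrast with the baseline threshold $\lambda_A$ of Proposition \ref{prop:AHonly} is worth flagging: the status-quo profit $\Pi_F$ is replaced throughout by $\underline{\Pi}_F^L$, reflecting that under an uncertain order the relevant outside option to acquiring is no longer ``nobody acquires'' but ``the rival acquires.''
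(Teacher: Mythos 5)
Your proposal is correct and follows essentially the same route as the paper's proof: the key step in both is that, because the rival acquires with probability one whenever it moves first, a firm facing an available startup infers it is the first mover, and the low type's comparison of $\bar{\Pi}_F^L - \pi_E$ against $\lambda \underline{\Pi}_F^H + (1-\lambda)\underline{\Pi}_F^L$ yields exactly the threshold $\lambda_A'$. Your additional remarks on the pooled information set and the high type's dominance argument merely make explicit what the paper leaves implicit.
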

\begin{proof}
	Suppose that firm $j$ behaves as suggested in the proposition and that the entrepreneur accepts bids if and only if they are at least $\pi_E$. We consider the incentives of firm $i$. Given Assumption \ref{ass:A1}, it is a dominant strategy for a high-match firm $i$ to make a bid $\pi_E$ to do an acquihire whenever it has the chance to do so. Now, consider a low-match firm $i$. Given that firm $j$ would always do an acquihire, firm $i$ knows that it is moving first. Hence, doing nothing yields a payoff of $\lambda \underline{\Pi}_F^H + (1-\lambda) \underline{\Pi}_F^L$, while doing an acquihire yields $\bar{\Pi}_F^L - \pi_E$. Hence, doing the acquihire is optimal if and only if $\lambda \geq \lambda_{A,U}$, completing the proof. 
\end{proof}

The result shows that our result in Proposition  \ref{prop:AHonly} remains qualitatively unchanged when firms do not the order in which they move. In particular, talent hoarding continues to arise as long as high types are sufficiently likely.

\subsection{Emerging Competitor}

We consider a scenario in which the two incumbents do not coexist. Specifically, only one incumbent is active, while the other firm may enter with probability $q\in [0,1]$. If the second firm does not enter, and there is no acquihire, the first firm makes a profit $\Pi_M$. Acquihiring the startup yields a match-dependent profit of $\bar{\Pi}_M^H$ or $\bar{\Pi}_M^L$, where---mirroring Assumption \ref{ass:A1}---we assume $\bar{\Pi}^H_M>\Pi_M+ \pi_E>\bar{\Pi}^L_M$. If the second firm does enter, all profits align with those in the benchmark. 

\begin{proposition}
	Suppose Assumption \ref{ass:A1}  and $\bar{\Pi}^H_M>\Pi_M+ \pi_E>\bar{\Pi}^L_M$ hold, then  firm 1's behavior in any perfect Bayesian equilibrium (PBE) is uniquely specified. Namely, if firm 1 is a high match with the startup, it will pursue an acquihire; if it is a low match, it will pursue an acquihire if and only if
\begin{align}
    \lambda \geq \lambda_{A, M}\equiv \frac{\pi_E - q(\bar{\Pi}_F^L - \Pi_F)- (1-q)(\bar{\Pi}_M^L - \Pi_M)}{q(\Pi_F - \underline{\Pi}_F^H)}.
\end{align}
\end{proposition}

\vspace{2mm}

\begin{proof}
	Suppose firm 1 has not pursued an acquihire. If the competitor does not emerge, we are done and the monopolist's profit reads $\Pi_M$. If the competitor emerges, it follows from Assumption \ref{ass:A1} that, for any belief, firm 2 will pursue an acquihire if and only if it is a high type. Moving to stage 1, firm 1's belief is given by the prior. A high-match firm 1 will always pursue an acquihire by Assumption \ref{ass:A1}. A low-match firm 1  will do so whenever $q\bar{\Pi}_F^L+ (1-q)\bar{\Pi}_M^L - \pi_E \geq q(\lambda \underline{\Pi}_F^H + (1-\lambda)\Pi_F) + (1-q) \Pi_M$ or, equivalently, $\lambda\geq \lambda_{A, M}$.
\end{proof}

 \subsection{Surplus Sharing} \label{app:surplussharing}

We modify our baseline model to allow for arbitrary degrees of surplus sharing between the entrepreneur and the firm when an acquihire takes place. Thus, firms still move sequentially, but in case of an acquihire the entrepreneur receives a share $\xi \in [0,1]$ of the surplus. We define surplus here as the difference between the joint payoffs arising from an acquihire and the joint payoffs arising in the case of no acquihire. 

\begin{proposition}
Talent hoarding can happen if the following condition holds:
\begin{align*}
	\xi \leq \frac{\bar{\Pi}_F^L - \underline{\Pi}_F^H-\pi_E}{\bar{\Pi}_F^H - \Pi_F-\pi_E}.
\end{align*}
\end{proposition}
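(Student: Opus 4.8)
The plan is to solve the sequential game backward exactly as in the proof of Proposition \ref{prop:AHonly}, the only new ingredient being that surplus sharing raises the price firm 1 must pay by improving the entrepreneur's outside option. First I would pin down the second mover. By Assumption \ref{ass:A1} the joint gain of a high-match acquihire, $\bar{\Pi}_F^H - \Pi_F - \pi_E$, is positive while that of a low match, $\bar{\Pi}_F^L - \Pi_F - \pi_E$, is negative, so firm 2, for any belief, acquihires if and only if it is a high match. The one thing to record is that under surplus sharing a successful high-match acquihire now leaves the entrepreneur with $\pi_E + \gamma(\bar{\Pi}_F^H - \Pi_F - \pi_E)$ rather than merely $\pi_E$.

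Next I would compute the entrepreneur's reservation value at the moment she evaluates firm 1's bid. If she rejects, the game proceeds to firm 2, who acquihires (sharing surplus with her) precisely when it is a high match, so her continuation value from rejecting firm 1 equals $\pi_E + \lambda\gamma(\bar{\Pi}_F^H - \Pi_F - \pi_E)$. A low-match firm 1 transacts only if doing so is jointly beneficial relative to the two parties' disagreement payoffs; crucially, whether firm 1 extracts all of this joint surplus or merely a $(1-\gamma)$ share changes only the price, not the decision to transact, so $\gamma$ enters the low-match firm's problem solely through the opponent-inflated outside option. Writing firm 1's do-nothing payoff as $\lambda\underline{\Pi}_F^H + (1-\lambda)\Pi_F$, a low-match firm 1 thus acquihires if and only if the surplus
\[
\Sigma(\lambda) \equiv \bar{\Pi}_F^L - \left[\lambda\underline{\Pi}_F^H + (1-\lambda)\Pi_F\right] - \left[\pi_E + \lambda\gamma\left(\bar{\Pi}_F^H - \Pi_F - \pi_E\right)\right]
\]
is nonnegative.

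Finally I would note that $\Sigma$ is affine in $\lambda$ with $\Sigma(0) = \bar{\Pi}_F^L - \Pi_F - \pi_E < 0$ by Assumption \ref{ass:A1}(i). Hence talent hoarding can arise for some $\lambda\in(0,1]$ if and only if $\Sigma(1)\geq 0$, and since $\Sigma(1) = \bar{\Pi}_F^L - \underline{\Pi}_F^H - \pi_E - \gamma(\bar{\Pi}_F^H - \Pi_F - \pi_E)$, the positivity of the denominator $\bar{\Pi}_F^H - \Pi_F - \pi_E$ lets me rearrange $\Sigma(1)\geq 0$ directly into the claimed bound $\gamma \leq (\bar{\Pi}_F^L - \underline{\Pi}_F^H - \pi_E)/(\bar{\Pi}_F^H - \Pi_F - \pi_E)$. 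When the bound holds one further gets $\Sigma(1)\geq 0 > \Sigma(0)$, so the slope of $\Sigma$ is positive and hoarding occurs for every $\lambda$ above the root of $\Sigma$, confirming that $\lambda=1$ is indeed the most favorable case.

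The main obstacle is conceptual rather than computational: making precise that the entrepreneur's relevant reservation is her \textbf{continuation value} $\pi_E + \lambda\gamma(\bar{\Pi}_F^H - \Pi_F - \pi_E)$, not her standalone value $\pi_E$, and arguing cleanly that the within-deal split between firm 1 and the entrepreneur is irrelevant to firm 1's decision to transact while the rival's surplus-sharing is what inflates the price. Once this is settled, the sign argument on the affine function $\Sigma$ and the final rearrangement are routine.
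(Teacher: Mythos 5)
Your proposal is correct and follows essentially the same route as the paper: backward induction, firm 2 acquihiring iff high match, the entrepreneur's reservation value inflated to $\pi_E + \lambda\gamma(\bar{\Pi}_F^H - \Pi_F - \pi_E)$, and the sign of the low-match firm 1's joint surplus determining whether it transacts. Your surplus $\Sigma(\lambda)$ is algebraically identical to the paper's expression; the only cosmetic difference is that you check $\Sigma(1)\geq 0 > \Sigma(0)$ at the endpoints while the paper rearranges into a threshold $\lambda_{A,S}$ and imposes $\lambda_{A,S}\leq 1$, which yields the same bound on $\gamma$.
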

\begin{proof}
	We solve the game backward. Consider a high-match firm 2. The surplus resulting from an acquihire is given by $\bar{\Pi}_F^H - \Pi_F-\pi_E>0$ so that an acquihire takes place and the resulting payoffs for firm 2 and the entrepreneur read $\Pi_F + (1-\xi) (\bar{\Pi}_F^H - \Pi_F-\pi_E)$ and $\pi_E + \xi (\bar{\Pi}_F^H - \Pi_F-\pi_E)$, respectively. Consider a low-match firm 2. The surplus then reads $\bar{\Pi}_F^L - \Pi_F-\pi_E<0$ so that no acquihire takes place. 
	
	Moving to period 1, consider a high-match firm 1. The surplus resulting from an acquihire reads
	\begin{align*}
		&\bar{\Pi}_F^H - (\lambda \underline{\Pi}_F^H + (1-\lambda) \Pi_F) - ((1-\lambda)\pi_E + \lambda (\pi_E + \xi (\bar{\Pi}_F^H - \Pi_F-\pi_E)))\\
		&= (\bar{\Pi}_F^H - \Pi_F-\pi_E)(1-\lambda\xi)  +\lambda (\Pi_F - \underline{\Pi}_F^H)\geq 0,
	\end{align*}
so that an acquihire takes place.  Consider a low-match firm 1. The surplus resulting from an acquihire reads
\begin{align*}
	(\bar{\Pi}_F^H - \Pi_F-\pi_E)(1-\lambda\xi)  +\lambda (\Pi_F - \underline{\Pi}_F^H) + \bar{\Pi}_F^L - \bar{\Pi}_F^H.
\end{align*}
This expression is positive (implying that an acquihire is profitable) whenever
\begin{align*}
	\lambda \geq \lambda_{A,S} \equiv\frac{\pi_E+\Pi_F-\bar{\Pi}_F^L}{\Pi_F-\underline{\Pi}_F^H - \xi(\bar{\Pi}_F^H - \Pi_F-\pi_E)},
\end{align*}
for $\Pi_F-\underline{\Pi}_F^H - \xi(\bar{\Pi}_F^H - \Pi_F-\pi_E)>0$. We have $\lambda_{A,S} \leq 1$ whenever the condition stated in the proposition holds.
\end{proof}

\subsection{Dominant Firm}\label{sec:dominant_firm}

Consider a situation where instead of two symmetric firms, the industry is characterized by a dominant firm and a challenger firm. The firms now have different payoffs, which we denote $(\bar{\Pi}^H_D, \bar{\Pi}_D^L, \Pi_D, \underline{\Pi}_D^L , \underline{\Pi}_D^H )$ for the dominant firm and $(\bar{\Pi}^H_C, \bar{\Pi}_C^L, \Pi_C, \underline{\Pi}_C^L , \underline{\Pi}_C^H )$ for the challenger. We maintain Assumption \ref{ass:A1} for both firms, that is we assume that both (i) $\bar{\Pi}^H_F>\Pi_F+ \pi_E>\bar{\Pi}_F^L$ and (ii) $\Pi_F\geq \underline{\Pi}_F^L >\underline{\Pi}_F^H$ hold for each $F \in \{D,C\}$. We consider the setting as in Proposition \ref{prop:AHonly}, that is firms can either engage in acquihires or do nothing.

\begin{corollary}[Acquihires with a dominant firm] \label{cor:DominantFirm} $ $
	\begin{enumerate}[label=(\roman*)] 
		\item If either the dominant or the challenger firm moves second it engages in an aquihire if and only if it is the high match type. 
		\item If the dominant firm moves first, it engages in an acquihire if it is the high match type or if it is the low match type and the probability that the challenger firm is the high match type is
		\begin{align}
			\lambda>\lambda_{A,D}\equiv\frac{\pi_E+\Pi_D-\bar{\Pi}_D^L}{\Pi_D-\underline{\Pi}_D^H}.
		\end{align}
        \item If the challenger firm moves first, it engages in an acquihire if it is the high match type or if it is the low match type and the probability that the dominant firm is the high type is
		\begin{align}
			\lambda>\lambda_{A,C}\equiv\frac{\pi_E+\Pi_C-\bar{\Pi}_C^L}{\Pi_C-\underline{\Pi}_C^H}.
		\end{align}
	\end{enumerate}
\end{corollary}

The corollary follows directly from Proposition 1. An interesting question is under which conditions would the dominant firm be more prone to talent hoarding than the challenger firm, i.e., when is $\lambda_{A,C} > \lambda_{A,D}$? The following result gives a set of simple sufficient conditions.

\begin{proposition}\label{prop:DominantFirmSufficient}
	If the following two inequalities hold, then $\lambda_{A,C} > \lambda_{A,D}$:
	\begin{align*}
		\Pi_C-\underline{\Pi}_C^H & < \Pi_D-\underline{\Pi}_D^H ,\\
		\bar{\Pi}_C^L - \Pi_C &< \bar{\Pi}_D^L - \Pi_D.
	\end{align*}
\end{proposition}
\begin{proof}
From Corollary \ref{cor:DominantFirm} we have the threshold values 
		\begin{align*}
	\lambda_{A,D}\equiv\frac{\pi_E+\Pi_D-\bar{\Pi}_D^L}{\Pi_D-\underline{\Pi}_D^H}, \quad \text{and }\quad \lambda_{A,C}\equiv\frac{\pi_E+\Pi_C-\bar{\Pi}_C^L}{\Pi_C-\underline{\Pi}_C^H}.
\end{align*}
Note  that the conditions stated in the proposition imply
\begin{align*}
	\lambda_{A,D} = \frac{\pi_E+\Pi_D-\bar{\Pi}_D^L}{\Pi_D-\underline{\Pi}_D^H}<  \frac{\pi_E+\Pi_D-\bar{\Pi}_D^L}{\Pi_C-\underline{\Pi}_C^H}< \frac{\pi_E+\Pi_C-\bar{\Pi}_C^L}{\Pi_C-\underline{\Pi}_C^H}= \lambda_{A,C},
\end{align*}
completing the proof.
\end{proof}

Intuitively, the two inequalities above require that the dominant firm both stands to lose more if a startup is acquired by the challenger (maybe because the dominant firm has a larger market share, so it has more to lose) and stands more to gain by acquihiring the startup itself (a larger market share might be an explanation again, as any improvement could be offered to more consumers more rapidly). 

A simple specification that satisfies these inequalities is an ``equal proportional gain/loss''. Formally, let the profit functions be given by 
\begin{align*}
	\bar{\Pi}^H_D = H \Pi_D, \quad \bar{\Pi}_D^L = L \Pi_D, \quad \underline{\Pi}_D^L = \ell \Pi_D, \quad \underline{\Pi}_D^H = h \Pi_D, \\
	\bar{\Pi}^H_C = H \Pi_C, \quad  \bar{\Pi}_C^L= L \Pi_C, \quad  \underline{\Pi}_C^L  = \ell \Pi_C, \quad   \underline{\Pi}_C^H= h \Pi_C,   
\end{align*}
where $H>L>1$ and $1 \leq \ell > h \geq 0$. This implies that an acquihire (either own or by the competitor) has a proportionally equal effect on both the dominant firm and the challenger firm. As long as $\Pi_D > \Pi_C$ (i.e., absent any acquihire, the dominant firm has higher profits than the challenger), straightforward calculations show that the two inequalities of Proposition \ref{prop:DominantFirmSufficient} are satisfied and the dominant firm is more prone to acquihires than the challenger.

\subsection{Multiple Firms}\label{sec:multiple_firms}

Often, more than two firms are competing in a given market. In this extension, we thus allow for $n\geq 2$ firms competing in the same market. As in the baseline, firms may sequentially attempt to do an acquihire of the startup and each firm's match type $\theta\in \{L, H\}$ is an independent draw with identical probability $\Pr(\theta=H) = \lambda$.  In the absence of an acquihire, each firm makes profits $\Pi_F(n)$. If firm $i$ with match type $\theta$ makes an acquihire profits read $\bar{\Pi}_F^\theta(n)$ for firm $i$ and $\underline{\Pi}_F^\theta(n)$ for firms $j\neq i$.

To make things concrete, we focus on a Cournot oligopoly setting with $n$ firms and the following inverse demand function:
\begin{align}
    P(q_1,...,q_n)=a-b\cdot \sum_{i=1}^nq_i,
\end{align}
where $q_i$ indicates the quantity choice of firm $i\in \{1,...,n\}$  and $a, b>0$. We assume that the demand intercept $a$ is sufficiently large to ensure an interior solution. Let $c>0$ denote the constant marginal cost when no acquisitions occur. An acquihire by one firm is assumed to reduce its marginal cost to $c-\theta>0$, where $\theta\in \{H, L\}$  satisfies $H>L\geq 0$. 

We first establish that increasing competition eliminates talent hoarding in the limit: as $n\rightarrow +\infty$,  firms acquire startups if and only if it is efficient to do so.\footnote{As it will become clear, this result does not hinge on the Cournot specification; it holds as long as the benchmark profit $\Pi_F$ converges to zero as the number of firms increases.} To see this, take any one of the $n$ firms and suppose that its match value with the startup is $\theta$. It is clear that if
$\pi_E<\bar{\Pi}_F^{\theta}(n)-\Pi_F(n)$, this firm will acquire the startup whenever possible. In contrast, if
\begin{align}
\label{inefficient-acquihire}
    \pi_E>\bar{\Pi}_F^{\theta}(n)-\Pi_F(n),
\end{align}
acquihiring is inefficient. At the same time, a necessary condition for the firm to have incentives to do an acquihire is
\begin{align}
\label{incentive-acquihire}
    \bar{\Pi}_F^{\theta}(n)-\underline{\Pi}_F^{H}(n)>\pi_E.
\end{align}
Note that the LHS of \eqref{incentive-acquihire} is the maximum difference in the firm's profits between acquiring and not acquiring the startup. In our Cournot specification, as $n\rightarrow +\infty$, both $\Pi_F(n)$ and $\underline{\Pi}_F^{H}(n)$ converge to zero. Therefore, conditions \eqref{inefficient-acquihire} and \eqref{incentive-acquihire} cannot hold simultaneously when $n$ is sufficiently large. This implies that, whenever \eqref{inefficient-acquihire} holds, no firm with a match value $\theta$ will conduct an acquihire, therefore the talent hoarding problem cannot occur in equilibrium.

Next, we argue that the impact of competition on talent hoarding can be non-monotone. Note that when $n=2$, talent hoarding arises in equilibrium if and only if
\begin{align}
\label{talent-hoarding-2-firms}
    \pi_E\leq \lambda \left(\Pi_F(2)-\underline{\Pi}_F^H(2)\right)+\bar{\Pi}_F^L(2)-\Pi_F(2).
\end{align}
In comparison, suppose that $n=3$ and firm $1$ anticipates that firms $2$ and $3$ will acquire the startup when their match values are high.  Then, firm $1$ will be inclined to conduct an acquihire even when it draws a low match value, if the following condition holds:
\begin{align}
\label{talent-hoarding-3-firms}
    \pi_E\leq (2\lambda-\lambda^2) \left(\Pi_F(3)-\underline{\Pi}_F^H(3)\right)+\bar{\Pi}_F^L(3)-\Pi_F(3). 
\end{align}

Now, consider a parametric example with $\lambda=0.1$, $a=10$, $b=1$, $c=3$, $H=2$, $L=0$, and $\pi_E\in (0.534, 0.54)$. One can show that Assumption 1 holds. In addition, condition \eqref{talent-hoarding-2-firms} is violated but condition \eqref{talent-hoarding-3-firms} is satisfied. Thus, in the current example, talent hoarding will not occur in equilibrium when there are two firms. However, with three firms, talent hoarding will for sure arise in equilibrium.

\newpage
\bibliographystyle{ecta}
\bibliography{talenthoarding}

\end{document}